\theoremstyle{definition}
\newtheorem{definition}{Definition}
\theoremstyle{remark}
\newtheorem{remark}{Remark}
\theoremstyle{theorem}
\newtheorem{theorem}{Theorem}
\theoremstyle{lemma}
\newtheorem{lemma}{Lemma}
\theoremstyle{corollary}
\theoremstyle{proposition}
\newtheorem{proposition}{Proposition}
\begin{document}


\title{Guiding center dynamics as motion on a formal slow manifold in loop space} 



\author{J. W. Burby}
\affiliation{Los Alamos National Laboratory, Los Alamos, New Mexico 87545, USA}


\date{\today}

\begin{abstract}
Since the late 1950's, the dynamics of a charged particle's ``guiding center" in a strong, inhomogeneous magnetic field have been understood in terms of near-identity coordinate transformations. The basic idea has been to approximately transform away the coupling between the fast gyration around magnetic fields lines and the remaining slow dynamics. This basic understanding now serves as a foundation for describing the kinetic theory of strongly magnetized plasmas. I present a new way to understand guiding center dynamics that does not involve complicated coordinate transformations. Starting from a dynamical systems formulation of the motion of parameterized loops in a charged particle's phase space, I identify a formal slow manifold in loop space. Dynamics on this formal slow manifold are equivalent to guiding center dynamics to all orders in perturbation theory. After demonstrating that loop space dynamics comprises an infinite-dimensional noncanonical Hamiltonian system, I recover the well-known Hamiltonian formulation of guiding center motion by restricting the (pre-) symplectic structure on loop space to the finite-dimensional guiding center formal slow manifold. 

%
\end{abstract}

\pacs{}

\maketitle 

\section{Introduction}
Charged particles move in helical trajectories that wind around magnetic field lines. When the strength of the magnetic field is high this spinning motion is exceedingly fast and the corresponding helix is tightly wound.  Thus on timescales large compared with the cyclotron period a charged particle's trajectory resembles the motion of an approximately circular ring that may drift along or across the magnetic lines of force. Modeling these averaged, or ``guiding center," dynamics efficiently, especially in non-uniform magnetic fields, is known as the guiding center problem.

While researchers have developed various ingenious strategies\cite{kruskal58,Gardner_59,Bogoliubov_1961,Northrop_1963,Littlejohn_1981,Littlejohn_1983,Hazeltine_Waelbroeck_2004} for solving the guiding center problem, the technique that has become most widely adopted was developed by Kruskal in Ref.\,\onlinecite{Kruskal_1962} in the context of a broad class of oscillatory dynamical systems. After introducing a special sequence of near-identity coordinate transformations, Kruskal observed that short-timescale oscillations in these systems approximately decouple from the slower drift dynamics. By successively refining the near-identity transformation, the decoupling becomes increasingly complete. When applied to charged particle dynamics, Kruskal's method describes precisely the slow evolution of the fiducial ring swept out by a particle's gyration.

Kruskal's technique owes its popularity to its rigorous mathematical foundation, its tractability at low orders in perturbation theory, and its ability to explain the general phenomenon of adiabatic invariance. However, the technique also possesses several important drawbacks related to its use of complicated near-identity coordinate transformations. Three of these are:
\begin{enumerate}
\item While the method is arbitrarily accurate in principle, finding high-order approximations to the averaged dynamics requires a heroic amount of algebra, even by computer algebra standards.\cite{Burby_gc_2013}
\item While Kruskal shows that adiabatic invariance may be understood as a consequence of Noether's theorem, the symmetry that gives rise to the adiabatic invariant in Kruskal's theory is hidden. Uncovering the symmetry is akin to unearthing an infinite fossil; fractional progress requires painstaking effort, and completing the task is impossible.
\item The coordinates introduced by the method break spatial locality. Therefore the region in phase space occupied by an object that may interrupt a particle's motion, e.g. a wall, is rendered extremely complicated. 
\end{enumerate}

In this article I will describe an alternative rigorous solution of the guiding center problem that is completely free of drawbacks (2) and (3), and that suffers from drawback (1) less severely. In particular, I will realize guiding center dynamics as the restriction of loop space dynamics to a formal slow manifold.\cite{Fenichel_1979,Lorenz_1986,Lorenz_1987,Lorenz_1992} R. S. MacKay\cite{MacKay_2004} refers to such a demonstration as constructing a ``slow manifold with internal oscillation." Here loops are periodic parameterized curves in a charged particle's phase space that evolve by being dragged along by Newton's second law. 

Using the fact that loop space dynamics and its associated formal slow manifold are invariant under an obvious relabeling symmetry, I will show that the symmetry underlying adiabatic invariance is rendered obvious from the loop space perspective; it is no longer hidden. By showing that the formal slow manifold may be calculated without introducing spatially-nonlocal coordinates in phase space, I will demonstrate that the guiding center problem may be solved without breaking spatial locality. Finally, by formulating the guiding center problem as a slow manifold reduction problem, the challenge of capturing high-order effects in the drift dynamics will be recast as a the challenge of selecting initial conditions for loop space dynamics sufficiently close to the formal slow manifold. This reformulation is interesting in light of the numerical method formulated by Gear, Kaper, Kevrekidis, and Zagaris\cite{Gear_2006,Zagaris_2009} for generating points in phase space arbitrarily close to a given slow manifold.

\section{Loop space dynamics\label{sec:abstract_loops}}
This section will define and describe loop space dynamics in the context of general dynamical systems. A special feature of loop space dynamics associated with Hamiltonian systems will also be explained. Loop space dynamics governed by the Lorentz force Law will be presented as an example of the general theory. 

\subsection{Abstract Background}

Recall that a \emph{dynamical system} on a set $P$ is a one-parameter family of mappings $\mathcal{F}_t:P\rightarrow P$ with the properties $\mathcal{F}_0 = \text{id}_{P}$ and $\mathcal{F}_{t+s}= \mathcal{F}_t\circ \mathcal{F}_s$. Such a family of mappings is also referred to as a \emph{flow} or \emph{flow map}. Given a point $z\in P$, the \emph{trajectory} through $z$ is defined in terms of the flow as the parameterized curve $\mathbb{R}\rightarrow P:t\mapsto \mathcal{F}_t(z)$. When $P$ is a manifold and $\mathcal{F}_t$ is smooth, the dynamical system may be recovered from its \emph{infinitesimal generator}, which is the vector field $X$ on $P$ given by
\begin{align}
X(z) = \frac{d}{d\epsilon}\bigg|_0 \mathcal{F}_{\epsilon}(z).
\end{align}
In other words, $X(z)$ is the initial velocity of the trajectory through $z$. In many situations it is easier to specify a dynamical system by giving its infinitesimal generator instead of its flow. Nevertheless, it is sometimes more convenient to specify the flow map directly.

Given a dynamical system $\mathcal{F}_t$ on $P$, it is possible to construct various other \emph{induced} dynamical systems on spaces constructed out of $P$. In particular, there is a dynamical system induced on the space $\ell P$  of maps $\tilde{z}_0:S^1\rightarrow P$, i.e. the space of parameterized \emph{loops} in $P$. (Here the circle $S^1$ is defined as the set $\mathbb{R}\text{ mod }2\pi$.)
 The flow map $\widetilde{\mathcal{F}}_t:\ell P\rightarrow\ell P$ is given by 
\begin{align}
(\widetilde{\mathcal{F}}_t(\tilde{z}_0))(\theta) = \mathcal{F}_t(\tilde{z}_0(\theta)).
\end{align}
It will be convenient to refer to this dynamical system on $\ell P$ as the \emph{loop-parallelized dynamics} in $P$. When $P$ is a smooth manifold and $\mathcal{F}_t$ is smooth, the infinitesimal generator $\widetilde{X}_0$ of loop-parallelized dynamics is given by
\begin{align}
(\widetilde{X}_0(\tilde{z}_0))(\theta) = X(\tilde{z}_0(\theta)).
\end{align}
Here we have identified the tangent space to $\ell P$ at $\tilde{z}_0$ with the space of smooth vector fields along $\tilde{z}_0$. Thus, loop-parallelized dynamics merely parallelizes the original dynamics on $P$ over the loop parameter $\theta\in S^1$.

From here on, suppose that $P$ is a smooth manifold and that $\mathcal{F}_t$ is smooth. Starting from loop-parallelized dynamics in $P$, \emph{loop space dynamics} in $P$ is constructed as follows. Fix a smooth functional $\Omega: \ell P \rightarrow \mathbb{R}$ that is invariant under the \emph{phase shift} $\tilde{z}_0\mapsto \tilde{z}_0^\psi$ for each $\psi\in S^1$, where 
\begin{align}
\tilde{z}_0^\psi(\theta) = \tilde{z}_0(\theta+\psi).
\end{align} 
(Note that the phase shift for fixed $\psi$ defines an invertible mapping on loop space $\ell P$.) Lift loop-parallelized dynamics to the dynamical system on 
on $\ell P\times S^1$ whose infinitesimal generator is given by
\begin{align}
\widetilde{X}_0^{\Omega}(\tilde{z}_0,S) = \widetilde{X}_0(\tilde{z}_0) + \Omega(\tilde{z}_0)\partial_S.
\end{align}
This dynamical system is a lift in the sense that dynamics in first factor $\ell P$ reproduce loop-parallelized dynamics; in other words there is a one-way coupling between $\tilde{z}_0$ and $S$.
Next apply the invertible transformation $\ell P\times S^1\rightarrow \ell P\times S^1$ given by
\begin{align}
\Phi:(\tilde{z}_0,S)\mapsto (\tilde{z}_0^{-S},S).
\end{align}
This transformation may be thought of as ``spinning" the loop $\tilde{z}_0$ by the phase $S$. Loop space dynamics is then defined as the dynamical system on $\ell P\times S^1$ with infinitesimal generator $\widetilde{X}^\Omega = \Phi_* \widetilde{X}_0^{\Omega}$. Here $\Phi_*$ denotes the pushforward along $\Phi$.  Therefore $\widetilde{X}^\Omega$ is merely $\widetilde{X}_0^\Omega $ expressed in the new ``coordinates" on $\ell P\times S^1$ defined by $\Phi$. An explicit expression for the infinitesimal generator of loop space dynamics is given by
\begin{align}
\widetilde{X}^\Omega(\tilde{z},S)& = (\Phi_* \widetilde{X}_0^{\Omega})(\tilde{z},S)\nonumber\\ 
&= T\Phi\circ \widetilde{X}_0^{\Omega}\circ \Phi^{-1}(\tilde{z},S)\nonumber\\
& = T\Phi\left(\widetilde{X}_0^{\Omega}(\tilde{z}^S) + \Omega(\tilde{z}^S)\partial_S\right)\nonumber\\
& = \widetilde{X}_0^{\Omega}(\tilde{z}) -\Omega(\tilde{z})\partial_\theta \tilde{z}  + \Omega(\tilde{z})\partial_S,
\end{align}
where we have used the invariance of $\Omega$ under phase shifts. Thus, the trajectory of an element $(\tilde{z},S)\in \ell P\times S^1$ satisfies the system of equations
\begin{gather}
\partial_t\tilde{z}(\theta,t) + \Omega(\tilde{z}(t))\partial_\theta \tilde{z}(\theta,t) = X(\tilde{z}(\theta,t))\\
\dot{S}(t) = \Omega(\tilde{z}(t)).
\end{gather}

The discussion so far has emphasized the generality and geometric origins of loop space dynamics. It is also useful to be aware of the relationship between loop space dynamics and the so-called nonlinear WKB approximation.\cite{Miura_1974} Suppose $z(t)$ is a solution of the ordinary differential equation $\dot{z} = X(z)$ comprising a rapid oscillation superimposed on top of a slowly evolving envelope. The nonlinear WKB approach to describing such a solution is to leverage the ansatz $z(t) = \tilde{z}(t,S(t))$, where $S(t)$ is a rapidly rotating phase and the profile $\tilde{z}$ is periodic in its second argument. Apparently solutions of this form must satisfy $\partial_t\tilde{z}(t,S(t))+\dot{S}\partial_\theta\tilde{z}(t,S(t)) = X(\tilde{z}(t,S(t)))$. If $\dot{S}$ is chosen to approximate the rate of phase oscillations, say $\Omega$, then the scale-separation assumption implies that the profile approximately satisfies $\partial_t\tilde{z}(t,\theta)+\Omega\partial_\theta\tilde{z}(t,\theta) = X(\tilde{z}(t,\theta))$ \emph{for each} $\theta\in S^1$. This is of course the governing equation of loop space dynamics. It is interesting to notice that, from the WKB perspective, loop space dynamics would appear to represent an approximation based on scale separation. However, the geometric picture emphasized in this section shows that there is no need invoke approximations in order to provide loop space dynamics with a useful interpretation. Namely, loop space dynamics describes the evolution of parameterized loops, ``spun" by a phase, entrained in the flow of a given dynamical system. A field-theoretic generalization of this construction is discussed in Ref.\,\onlinecite{Burby_Ruiz_2019}.

Loop space dynamics associated with a Hamiltonian system enjoys an important special property that will be exploited later when connecting loop space dynamics with guiding center theory. Let $(P,-\mathbf{d}\vartheta)$ be an exact symplectic manifold (not necessarily a cotangent bundle) and fix a function $H:P\rightarrow \mathbb{R}$ that will serve as the Hamiltonian. There is then a unique vector field $X_H$ on $P$ that satisfies
\begin{align}
\iota_{X_H}\mathbf{d}\vartheta = -\mathbf{d}H.
\end{align} 
The vector field $X_H$, which is known as the Hamiltonian vector field, is the infinitesimal generator of a Hamiltonian dynamical system on $P$. 
 A useful way of characterizing the dynamical system generated by $X_H$ is in terms of the so-called phase space variational principle. This variational principle asserts that a parameterized curve $[t_1,t_2]\rightarrow P:t\mapsto z(t)$ is a trajectory of some point $z\in P$ under the dynamical system generated by $X_H$ if and only if the first fixed-endpoint variation of the action functional 
\begin{align}
A(z) = \int_{t_1}^{t_2}\bigg(\vartheta(z(t))[\dot{z}(t)] - H(z(t))\bigg)\,dt
\end{align}
vanishes at $z$. The special feature of loop space dynamics induced by $X_H$ is that they also obey a phase space variational principle. In particular, if $t\mapsto (\tilde{z}(t),S(t))$ is a trajectory of the loop space dynamics associated with $X_H$, then the first fixed-endpoint variation of the action
\begin{align}\label{loop_action_gen}
\tilde{A}(\tilde{z},S) = \int_{t_1}^{t_2} \fint \bigg(\vartheta(\tilde{z}(\theta,t))[\partial_t\tilde{z}(\theta,t)+\dot{S}(t)\partial_\theta\tilde{z}(\theta,t)] - H(\tilde{z}(\theta,t))\bigg)\,d\theta\,dt
\end{align}
vanishes at $(\tilde{z},S)$. Here $\fint = (2\pi)^{-1}\int_0^{2\pi}$. Note that this variational principle renders loop space dynamics associated with a Hamiltonian system as a classical field theory on a $(1+1)$-dimensional spacetime. Therefore Noether's theorem may be used to extract conservation laws from symmetries. The conserved quantity associated with time translation invariance is the loop energy
\begin{align}
\mathcal{H}(\tilde{z}) = \fint H(\tilde{z}(\theta))\,d\theta,\label{loop_energy_def}
\end{align}
while the conserved quantity associated with (time-independent) shifts $S\mapsto S+\psi$ is the (normalized) loop action
\begin{align}
J(\tilde{z}) = \fint \vartheta(\tilde{z}(\theta))[\partial_\theta\tilde{z}(\theta)]\,d\theta = \frac{1}{2\pi}\int_{\tilde{z}}\vartheta.\label{loop_action_def}
\end{align}

\subsection{Example: The Lorentz Force}
When the electric field is zero and the magnetic field is time-independent, Lorentz force dynamics are governed by the ordinary differential equation on $P = Q\times\mathbb{R}^3$ given by
\begin{align}
\dot{\bm{v}} =& \frac{1}{\epsilon}\bm{v}\times\bm{B}(\bm{x})\\
\dot{\bm{x}}=& \bm{v},
\end{align}
where $\bm{x}$ is contained in $Q =  \mathbb{R}^3\text{ or }Q = (S^1)^3$, $\bm{v}\in\mathbb{R}^3$, $\bm{B}$ is vector field on $Q$ that may be written as the curl of another vector field $\bm{A}$, and $\epsilon$ is the mass-to-charge ratio. (The mass-to-charge ratio may be negative.) This ordinary differential equation may be identified with the vector field $X$ on $P$ given by
\begin{align}
X(\bm{x},\bm{v}) = \bm{v}\cdot\partial_{\bm{x}} + \frac{1}{\epsilon}\bm{v}\times\bm{B}(\bm{x})\cdot\partial_{\bm{v}}.
\end{align}
After equipping $P$ with the exact symplectic form $-\mathbf{d}\vartheta$, where
\begin{align}
\vartheta(\bm{x},\bm{v}) = \frac{1}{\epsilon} \bm{A}(\bm{x})\cdot d\bm{x} +\bm{v}\cdot d\bm{x},
\end{align}
and introducing the Hamiltonian 
\begin{align}
H(\bm{x},\bm{v}) = \frac{1}{2} |\bm{v}|^2,
\end{align}
it is straightforward to verify that $X = X_H$ is the Hamiltonian vector field associated with $H$.

Loop space dynamics associated with the Lorentz force is governed by the system of equations
\begin{align}
\partial_t\tilde{\bm{v}}(\theta,t) +  \Omega(\tilde{\bm{x}}(t),\tilde{\bm{v}}(t)) \partial_\theta\tilde{\bm{v}}(\theta,t) &= \frac{1}{\epsilon}\tilde{\bm{v}}(\theta,t)\times\bm{B}(\tilde{\bm{x}}(\theta,t))\label{lorentz_loop_v}\\
\partial_t\tilde{\bm{x}}(\theta,t)  + \Omega(\tilde{\bm{x}}(t),\tilde{\bm{v}}(t)) \partial_\theta\tilde{\bm{x}}(\theta,t) & = \tilde{\bm{v}}(\theta,t)\label{lorentz_loop_x}\\
\dot{S}(t) =  \Omega(\tilde{\bm{x}}(t),\tilde{\bm{v}}(t)),\label{lorentz_loop_S}
\end{align}
where the frequency functional $\Omega$ can, in principle, be any phase-shift invariant functional on $\ell P$. The choice for $\Omega$ that will be used in this article is
\begin{align}\label{good_omega}
\Omega(\tilde{\bm{x}},\tilde{\bm{v}}) = \frac{1}{\epsilon}|\bm{B}(\overline{\bm{x}})|,
\end{align}
where $\overline{\bm{x}} = \fint \tilde{\bm{x}}\,d\theta$, which clearly satisfies the phase-shift invariance property.

The action \eqref{loop_action_gen} that governs loop space dynamics for the Lorentz force is given by
\begin{align}
\tilde{A}(\tilde{\bm{x}},\tilde{\bm{v}},S) =& \int_{t_1}^{t_2}\fint\bigg(\left[\frac{1}{\epsilon}\bm{A}(\tilde{\bm{x}}(\theta,t))+\tilde{\bm{v}}(\theta,t)\right]\cdot\partial_t\tilde{\bm{x}}(\theta,t) - \frac{1}{2}|\tilde{\bm{v}}(\theta,t)|^2\bigg)\,d\theta\,dt\nonumber\\
&+\int_{t_1}^{t_2}\dot{S}(t)\,\bigg(\fint\left[\frac{1}{\epsilon}\bm{A}(\tilde{\bm{x}}(\theta,t))+\tilde{\bm{v}}(\theta,t)\right]\cdot \partial_\theta\tilde{\bm{x}}(\theta,t)\,d\theta\bigg)\,dt.\label{ll_action}
\end{align}
The first fixed-endpoint variation of the action is given by
\begin{align}
\delta\tilde{A} = &-\int_{t_1}^{t_2} \fint \bigg(\frac{1}{\epsilon}\bm{B}(\tilde{\bm{x}})\times [\partial_t\tilde{\bm{x}}+\dot{S}\,\partial_\theta\tilde{\bm{x}}] + [\partial_t\tilde{\bm{v}}+\dot{S}\,\partial_\theta\tilde{\bm{v}}]\bigg)\cdot \delta\tilde{\bm{x}}(\theta,t)\,d\theta\,dt\nonumber\\
&-\int_{t_1}^{t_2} \fint \bigg(\tilde{\bm{v}} - [\partial_t\tilde{\bm{x}}+\dot{S}\,\partial_\theta\tilde{\bm{x}}]\bigg)\cdot\delta\tilde{\bm{v}}(\theta,t)\,d\theta\,dt\nonumber\\
& - \int_{t_1}^{t_2} \bigg(\frac{d}{dt}\fint \bigg[\frac{1}{\epsilon}\bm{A}(\tilde{\bm{x}})+\tilde{\bm{v}}\bigg]\cdot \partial_\theta\tilde{\bm{x}}\,d\theta\bigg)\delta S(t)\,dt.
\end{align}
The Euler-Lagrange equations are therefore
\begin{gather}
\frac{1}{\epsilon}\bm{B}(\tilde{\bm{x}})\times [\partial_t\tilde{\bm{x}}+\dot{S}\,\partial_\theta\tilde{\bm{x}}] + [\partial_t\tilde{\bm{v}}+\dot{S}\,\partial_\theta\tilde{\bm{v}}] = 0\label{lfl_first}\\
\tilde{\bm{v}} - [\partial_t\tilde{\bm{x}}+\dot{S}\,\partial_\theta\tilde{\bm{x}}] =0\label{lfl_second}\\
\frac{d}{dt}\fint \bigg[\frac{1}{\epsilon}\bm{A}(\tilde{\bm{x}})+\tilde{\bm{v}}\bigg]\cdot \partial_\theta\tilde{\bm{x}}\,d\theta = 0.\label{lfl_third}
\end{gather}
Note that the first two equations, Eqs.\,\eqref{lfl_first} and \eqref{lfl_second}, imply the third, Eq.\,\eqref{lfl_third}, because
\begin{align}
\frac{d}{dt}\fint \bigg[\frac{1}{\epsilon}\bm{A}(\tilde{\bm{x}})+\tilde{\bm{v}}\bigg]\cdot \partial_\theta\tilde{\bm{x}}\,d\theta = & \fint \bigg(\frac{1}{\epsilon}\bm{B}\times \partial_t\tilde{\bm{x}} + \partial_t\tilde{\bm{v}}\bigg)\cdot\partial_\theta\tilde{\bm{x}}\,d\theta - \fint \partial_t\tilde{\bm{x}}\cdot\partial_\theta\tilde{\bm{v}}\,d\theta\nonumber\\
=&-\dot{S}\fint \bigg(\frac{1}{\epsilon}\bm{B}\times \partial_\theta\tilde{\bm{x}} + \partial_\theta\tilde{\bm{v}}\bigg)\cdot\partial_\theta\tilde{\bm{x}}\,d\theta - \fint (\tilde{\bm{v}}-\dot{S}\partial_\theta\tilde{\bm{x}})\cdot \partial_\theta\tilde{\bm{v}}\nonumber\\
=& - \frac{1}{2} \fint \partial_\theta |\tilde{\bm{v}}|^2\,d\theta\nonumber\\
=& 0.
\end{align}
It follows that the Euler-Lagrange equations will be satisfied if and only if
\begin{align}
\partial_t\tilde{\bm{v}}(\theta,t) +  \dot{S}(t) \partial_\theta\tilde{\bm{v}}(\theta,t) &= \frac{1}{\epsilon}\tilde{\bm{v}}(\theta,t)\times\bm{B}(\tilde{\bm{x}}(\theta,t))\\
\partial_t\tilde{\bm{x}}(\theta,t)  +\dot{S}(t) \partial_\theta\tilde{\bm{x}}(\theta,t) & = \tilde{\bm{v}}(\theta,t).
\end{align}
Note in particular that the Euler-Lagrange equations are satisfied if $(\tilde{\bm{x}},\tilde{\bm{v}},S)$ obeys loop space dynamics, \emph{regardless of the frequency functional} $\Omega$. In other words, the initial value problem for the Euler-Lagrange equations is ill-posed. This ill-posedness, along with the redundancy of the Euler-Lagrange equation \eqref{lfl_third}, is a hallmark of \emph{gauge symmetry}. There is not necessarily an issue with the fact that the Euler-Lagrange equations are ill-posed. As long as there is some well-posed differential equation whose solutions satisfy the Euler-Lagrange equations, many of the nice tools offered by variational principles are still applicable. In this case, loop space dynamics with a given frequency functional furnish such a differential equation.

\section{A formal slow manifold in loop space\label{sec:lorentz_sm}}
\subsection{Motivating Ideas}

An old and intuitive picture of the dynamics of charged particles in a strong magnetic field replaces the particle with a charged, superconducting ring of current. One reason for introducing loop space dynamics in the study of charged particle motion is to make this intuitive picture mathematically precise. 

There is an apparent gap between the intuitive picture of moving rigid rings and the loop space description, which involves deformable loops. This is not merely a technical annoyance. The evolution of an \emph{arbitrary} loop in the Lorentz force phase space will not approximate the motion of a rigid ring in any sense. Indeed, most loops become extremely contorted as time evolves, especially when $\bm{B}$ is chosen such that Lorentz force dynamics is chaotic.

The way to establish a link between loop space dynamics and the rigid ring picture is to introduce the concept of a slow manifold. Roughly speaking, a slow manifold is a special submanifold in the phase space of a dynamical system with multiple timescales. When an initial condition is chosen to lie on the slow manifold, its subsequent time evolution will remain close to the slow manifold for a long period of time. Thus, a slow manifold is an example of an \emph{almost invariant set}. Moreover, motion on the slow manifold only weakly couples to the fast timescale. This is the sense in which is a slow manifold is ``slow."

The remainder of this Section will demonstrate that the phase space for loop space dynamics contains a formal slow manifold. Moreover, points on this formal slow manifold may be identified with rigid rings in phase space. Interestingly, these rigid rings are not geometric circles. Instead their shape is described by a set of non-trivial \emph{shape functions} whose asymptotic expansion in powers of $\epsilon$ may be computed systematically. In light of the intuitive picture of guiding center dynamics as dynamics of rigid charged superconducting rings, these results strongly suggest that loop motion on the formal slow manifold corresponds in some way to guiding center dynamics. This intuition will be justified in Section \ref{gc_proof} by proving that dynamics on the formal slow manifold is equivalent to guiding center dynamics to all orders in perturbation theory.

The purpose of Section \ref{ham_struc} will be to demonstrate the sense in which the rigid rings that support guiding center motion are ``superconducting." According to Eq.\,\eqref{lfl_third}, the dynamics of a general loop conserves action. In particular, the dynamics of a loop that evolves on the formal slow manifold conserves action. At leading-order, it will turn out that the expression for the action of a rigid loop is the magnetic flux through the loop, i.e. the usual magnetic moment adiabatic invariant is recovered. Thus, to leading order, the rigid rings conserve flux exactly as superconductors do. At higher orders, this picture has to be distorted slightly because the exact expression for the action of a rigid ring differs from the flux. As a more complete way of describing the all-orders picture, and in order to illuminate the simple origins of the symmetry underlying adiabatic invariance for charged particles, I will show how the Hamiltonian formulation of guiding center dynamics due to Littlejohn\cite{Littlejohn_1981} may be recovered by restricting the presymplectic form on loop space associated with the action functional \eqref{ll_action} to the formal slow manifold. 
\subsection{Fast-Slow Systems and Their Slow Manifolds}
A useful class of dynamical systems for the precise study of slow manifolds consists of the \emph{fast-slow systems}.
\begin{definition}
A \emph{fast-slow dynamical system} is a dynamical system on a cartesian product $P=X\times Y$ of Banach spaces $X,Y$ whose infinitesimal generator $(\dot{x},\dot{y})$ has the form
\begin{align}
\epsilon\,\dot{y} =& f_\epsilon(x,y) \\
\dot{x} = & g_\epsilon(x,y),
\end{align}
where $f_\epsilon,g_\epsilon $ depend smoothly on $\epsilon$ in some open interval containing $\epsilon = 0$ and $D_y f_0(x,y)$ is invertible whenever $f_0(x,y) = 0$.
\end{definition}
\begin{remark}
The technical hypothesis on $D_yf_0$ ensures that the limiting differential algebraic equation (DAE),
\begin{align}
 0 & = f_0(x,y)\label{slave_limit}\\
 \dot{x} & = g_0(x,y) ,\label{slow_evo_limit}
\end{align}
has differentiation index $1$. In other words, when solving the DAE, Eq.\,\eqref{slave_limit} may first be eliminated by solving for $y$ as a function of $x$, giving a function $y_0^*(x)$. The function $y_0^*(x)$ may then be substituted into Eq.\,\eqref{slow_evo_limit} in order to obtain an autonomous ordinary differential equation for $x$. Fast-slow systems therefore provide a paradigm for studying dimensionality reduction.
\end{remark}

In the fast-slow setting, a \emph{formal slow manifold} may be defined precisely as follows.

\begin{definition}
Given a fast-slow system, $\epsilon \dot{y} = f_\epsilon(x,y)$, $\dot{x} = g_\epsilon(x,y)$, a \emph{formal slow manifold} is a formal power series
\begin{align}\label{slaving_function_gen}
y_\epsilon^*(x) = y_0^*(x) + \epsilon \,y_1^*(x) + \epsilon^2\,y_2^*(x) + \dots
\end{align}
that satisfies \emph{the invariance equation}
\begin{align}\label{invariance_equation}
\epsilon Dy_\epsilon^*(x)[g_\epsilon(x,y_\epsilon^*(x))] = f_\epsilon(x,y_\epsilon^*(x)),
\end{align}
to all orders in $\epsilon$.
\end{definition}

\begin{remark}
Note that if $y_\epsilon^*$ is a genuine solution of the invariance equation for each $\epsilon$, then the set $\Gamma_\epsilon = \{(x,y)\mid y = y_\epsilon^*(x)\} $ is invariant under the dynamics of the fast-slow system for each $\epsilon$. Therefore, when $y_\epsilon^*$ is a formal slow manifold, it is suggestive, though not rigorous, to think of the ``graph" of $y_\epsilon^*$ as an invariant manifold for each $\epsilon$. This is the rationale behind referring to an asymptotic series as a ``manifold." 
\end{remark}

One of the main motivations for studying fast-slow systems is the fact that such systems always contain unique formal slow manifolds, as described by the following Proposition.

\begin{proposition}\label{SM_existence}
Associated with each fast-slow system is a unique formal slow manifold. Moreover, the coefficient $y_k^*$ of the formal slow manifold may be computed algorithmically for any $k\in\{0,1,2,\dots\}$. In particular, the first two coefficients are determined by the equations
\begin{align}
f_0(x,y_0^*(x)) &= 0\label{shape_zero}\\
y_1^*(x) & = [D_yf_0(x,y_0^*(x))]^{-1}\bigg[ Dy_0^*(x)[g_0(x,y_0^*(x))] - f_1(x,y_0^*(x))\bigg],\label{shape_one}
\end{align}
where $f_1 = \frac{d}{d\epsilon}\big|_0 f_\epsilon$.
\end{proposition}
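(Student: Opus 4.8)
The plan is to construct the coefficients $y_k^*$ recursively by expanding the invariance equation \eqref{invariance_equation} in powers of $\epsilon$ and collecting terms order by order, showing at each stage that the operator $D_yf_0(x,y_0^*(x))$ acting on the unknown coefficient is invertible, so that the coefficient is uniquely determined. First I would establish the base case: setting $\epsilon=0$ in \eqref{invariance_equation} gives $0 = f_0(x,y_\epsilon^*(x))|_{\epsilon=0} = f_0(x,y_0^*(x))$, which is Eq.\,\eqref{shape_zero}. Here I would invoke the implicit function theorem, using precisely the fast-slow hypothesis that $D_yf_0(x,y)$ is invertible whenever $f_0(x,y)=0$, to conclude that a (locally unique) smooth solution $y_0^*(x)$ of \eqref{shape_zero} exists; this both pins down $y_0^*$ and guarantees that $D_yf_0(x,y_0^*(x))$ is an invertible operator, which is the structural fact that drives the whole recursion.

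Next I would carry out the inductive step. Write $y_\epsilon^*(x) = \sum_{j\ge0}\epsilon^j y_j^*(x)$ and substitute into both sides of \eqref{invariance_equation}. Expanding $f_\epsilon(x,y_\epsilon^*(x))$ by Taylor's theorem in both the explicit $\epsilon$-dependence and through the argument $y_\epsilon^*(x)$, the coefficient of $\epsilon^{k}$ on the right-hand side has the form $D_yf_0(x,y_0^*(x))[y_k^*(x)] + R_k(x)$, where $R_k$ depends only on $y_0^*,\dots,y_{k-1}^*$ and on the derivatives $f_0,f_1,\dots,f_k$ and their $y$-derivatives evaluated along $y_0^*$ (the $\epsilon^k$-coefficient of the left-hand side, $\epsilon\,Dy_\epsilon^*[g_\epsilon]$, likewise involves only $y_0^*,\dots,y_{k-1}^*$ because of the explicit prefactor $\epsilon$). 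Matching the two sides at order $\epsilon^k$ gives $D_yf_0(x,y_0^*(x))[y_k^*(x)] = Dy_{k-1}^*(x)[\dots] - R_k(x) + (\text{lower-order data})$; since $D_yf_0(x,y_0^*(x))$ is invertible by the base case, this determines $y_k^*(x)$ uniquely and gives an explicit algorithm for computing it. Specializing $k=1$: the $\epsilon^1$-coefficient of the left side is $Dy_0^*(x)[g_0(x,y_0^*(x))]$, and of the right side is $D_yf_0(x,y_0^*(x))[y_1^*(x)] + f_1(x,y_0^*(x))$, which upon solving for $y_1^*$ yields exactly Eq.\,\eqref{shape_one}.

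Finally I would assemble these observations: existence and uniqueness of the formal slow manifold follow because the recursion determines every coefficient and leaves no freedom, and the algorithmic claim follows because each step is a finite computation (Taylor coefficients of $f_\epsilon$, previously computed $y_j^*$ and their derivatives, and one application of the fixed invertible operator $[D_yf_0(x,y_0^*(x))]^{-1}$). I expect the main obstacle to be purely bookkeeping: writing the remainder term $R_k$ in a closed enough form to make the ``depends only on lower-order data'' claim transparent, in particular organizing the Fa\`a-di-Bruno-type expansion of $f_\epsilon(x,y_\epsilon^*(x))$ so that the $y_k^*$ contribution is cleanly isolated in the single linear term $D_yf_0(x,y_0^*(x))[y_k^*]$. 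A secondary point worth a sentence is that everything is at the level of formal power series, so no convergence issues arise and the only analytic input needed is the implicit function theorem for the base case; if one wanted genuine (rather than formal) local slow manifolds one would need a Banach-space implicit function theorem and care about domains, but the Proposition as stated asks only for the formal object.
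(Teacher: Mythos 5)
Your proof is correct and is the standard order-by-order expansion of the invariance equation that the paper relies on: the paper states this proposition without an explicit proof, but your argument --- base case $f_0(x,y_0^*(x))=0$, then inversion of $D_yf_0(x,y_0^*(x))$ at each order, with the explicit factor of $\epsilon$ on the left-hand side ensuring that $y_k^*$ enters the order-$\epsilon^k$ balance only through the single linear term $D_yf_0(x,y_0^*(x))[y_k^*(x)]$ --- is exactly the intended one, and your $k=1$ computation reproduces Eq.\,\eqref{shape_one}. The only caveat, which you already flag, is that ``uniqueness'' is per branch of solutions of $f_0(x,y)=0$: the fast-slow hypothesis guarantees nondegeneracy (hence local uniqueness) of each branch via the implicit function theorem, not global existence or uniqueness of $y_0^*$, so the proposition is implicitly understood relative to a chosen root branch.
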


Thus, fast-slow systems always contain formal invariant sets given as graphs of the fast variable $y$ over the slow variable $x$. Dynamics on such a set are formally prescribed by the infinitesimal generator $\dot{x} = g_\epsilon(x,y_\epsilon^*(x))$ on $X$. Because $g_\epsilon$ depends smoothly on $\epsilon$ in a neighborhood of $0$, dynamics on the formal slow manifold apparently do not involve the $O(\epsilon)$ timescale, and are in this sense slow.

Of course, there is no reason to expect that in general the series defining a formal slow manifold converges to give a true invariant set on which dynamics is slow. In the \emph{normally hyperbolic} case in finite dimensions, where the eigenvalues of $D_yf_0(x,y_0^*(x))$ are purely real, Fenichel\cite{Fenichel_1979} effectively established the convergence of the series using transversality arguments. In the \emph{normally elliptic} case, where the eigenvalues of $D_yf_0(x,y_0^*(x))$ are purely imaginary, the series are known to diverge in general due to resonance between the fast normal dynamics and the slow dynamics. However, for sufficiently smooth fast-slow systems, truncations of the series $y_\epsilon^*$ often define \emph{almost invariant sets}, meaning trajectories that begin near the truncated slow manifold remain nearby for long periods of time.\cite{MacKay_2004} In the analytic case, the time for ``sticking" to a truncated slow manifold may even be exponentially long. This state of affairs might be summarized by saying the series defining a formal slow manifold is meaningful even when it does not converge. This point is amplified vividly by Vanneste in Ref.\,\onlinecite{Vanneste_2008}, who applies Borel summation to a normally-elliptic slow manifold to define an ``optimal" almost invariant set. What emerges from this analysis is a detailed picture of exponentially-small high-frequency oscillations that are generated spontaneously by motions along the optimally-invariant set.

In the following two subsections, \ref{sub_sec:split} and \ref{slow_man_sub}, I will show that loop space dynamics associated with the Lorentz force may be written as a fast-slow system, and then explicitly compute the first two coefficients in the series $y_\epsilon^*$ defining the associated formal slow manifold. These low-order terms in the series will strongly suggest that dynamics on the formal slow manifold in loop space correspond in some way to guiding center dynamics. I will then prove the equivalence of dynamics on the formal slow manifold with guiding center dynamics to all orders in perturbation theory in subsection \ref{gc_proof}. In so doing, I will have shown that guiding center dynamics constitutes an example of what MacKay\cite{MacKay_2004} calls a ``a slow manifold with internal oscillation."

\subsection{Fast-Slow formulation of Lorentz Loop Dynamics\label{sub_sec:split}}

Consider now the problem of determining wether loop space dynamics associated with the Lorentz force comprise a fast-slow system. The infinitesimal generator for these dynamics is given in Eqs.\,\eqref{lorentz_loop_v}-\eqref{lorentz_loop_S}. The first step in finding a fast-slow split for this dynamical system is to introduce a decomposition of $\ell P$ into mean and fluctuating subspaces, $\ell P = P\oplus \widetilde{\ell P}$, where
\begin{align}
\widetilde{\ell P} = \bigg\{(\widehat{\mathbb{X}},\widehat{\mathbb{V}})\in\ell P\bigg| \fint  \widehat{\mathbb{X}}(\theta)\,d\theta = \fint \widehat{\mathbb{V}}(\theta)\,d\theta = 0\bigg\}.
\end{align}
For the remainder of this article, elements in $P$ will be denoted $(\overline{\bm{x}},\overline{\bm{v}})$. Thus, the phase space for  loop space dynamics is now expressed as $P\times \widetilde{\ell P}\times S^1$, with typical elements denoted $(\overline{\bm{x}},\overline{\bm{v}},\widehat{\mathbb{X}},\widehat{\mathbb{V}},S)$. The relationship between $(\overline{\bm{x}},\overline{\bm{v}},\widehat{\mathbb{X}},\widehat{\mathbb{V}},S)$ and the original loop space variables $(\tilde{\bm{x}},\tilde{\bm{v}},S)$ is
\begin{align}
\tilde{\bm{x}} =& \overline{\bm{x}} + \widehat{\mathbb{X}}\\
\tilde{\bm{v}} =& \overline{\bm{v}} + \widehat{\mathbb{V}}. 
\end{align}

The second step is to scale the fluctuating particle position $\widehat{\mathbb{X}}$ according to $\widehat{\mathbb{X}}\mapsto \epsilon^{-1} \widehat{\mathbb{X}}$. The transformed variable will be denoted $\widehat{\bm{\rho}} = \epsilon^{-1} \widehat{\mathbb{X}}$. The complete expression for the transformation applied in this second step is then $(\overline{\bm{x}},\overline{\bm{v}},\widehat{\mathbb{X}},\widehat{\mathbb{V}},S)\mapsto (\overline{\bm{x}},\overline{\bm{v}},\widehat{\bm{\rho}},\widehat{\mathbb{V}},S)$, which may be regarded as an invertible mapping from $P\times \widetilde{\ell P}\times S^1$ into itself because $\widetilde{\ell P}$ is scale invariant. The relationship between the original loop space variables $(\tilde{\bm{x}},\tilde{\bm{v}},S)$ and $(\overline{\bm{x}},\overline{\bm{v}},\widehat{\bm{\rho}},\widehat{\mathbb{V}},S)$ is given by
\begin{align}
\tilde{\bm{x}} =& \overline{\bm{x}} + \epsilon \widehat{\bm{\rho}}\\
\tilde{\bm{v}} =& \overline{\bm{v}} + \widehat{\mathbb{V}}. 
\end{align}

The third step is to parameterize the mean velocity variable $\overline{\bm{v}}$ as follows. Let $\bm{e}_1,\bm{e}_2$ be orthogonal unit vector fields on $\mathbb{R}^3$ that are everywhere orthogonal to the vector field $\bm{B}$. (As usual when discussing strongly magnetized particles, the field $\bm{B}$ is assumed to be nowhere vanishing.) Set $\bm{b} = \bm{B}/|\bm{B}|$ and assume $\bm{b} = \bm{e}_1\times \bm{e}_2$. Now introduce the mapping $(\overline{\bm{x}},\overline{\bm{v}})\mapsto (\overline{\bm{x}},\overline{u},\overline{v}_1,\overline{v}_2)$, where 
\begin{align}
\overline{u} =& \bm{b}(\overline{\bm{x}})\cdot\overline{\bm{v}}\\
\overline{v}_1 = &\bm{e}_1(\overline{\bm{x}})\cdot \overline{\bm{v}}\\
\overline{v}_2 = &\bm{e}_2(\overline{\bm{x}})\cdot \overline{\bm{v}}.
\end{align}
This mapping amounts to expressing $\overline{\bm{v}}$ in a moving orthonormal frame aligned with the magnetic field. The change of variables on discrete loop space in the third step is then given by $(\overline{\bm{x}},\overline{\bm{v}},\widehat{\bm{\rho}},\widehat{\mathbb{V}},S)\mapsto (\overline{\bm{x}},\overline{u},\overline{v}_1,\overline{v}_2,\widehat{\bm{\rho}},\widehat{\mathbb{V}},S)$. The relationship between the original loop space variables $(\tilde{\bm{x}},\tilde{\bm{v}},S)$ and $ (\overline{\bm{x}},\overline{u},\overline{v}_1,\overline{v}_2,\widehat{\bm{\rho}},\widehat{\mathbb{V}},S)$ is given by
\begin{align}
\tilde{\bm{x}} =& \overline{\bm{x}} + \epsilon \widehat{\bm{\rho}}\\
\tilde{\bm{v}} = & \overline{u}\bm{b}(\overline{\bm{x}}) + \overline{v}_1\bm{e}_1(\overline{\bm{x}}) + \overline{v}_2\bm{e}_2(\overline{\bm{x}}) + \widehat{\mathbb{V}}.
\end{align}
We remind those readers familiar with conventional guiding center theory that $(\widehat{\bm{\rho}},\widehat{\mathbb{V}})$ is an arbitrary element of $\widetilde{\ell P}$. In particular, $\widehat{\bm{\rho}}$ is \emph{not} required to be orthogonal to $\bm{b}$. 

The fourth and final step is to parameterize the fluctuating velocity variable $\widehat{\mathbb{V}}$ as follows. First decompose $\widehat{\mathbb{V}}$ into the sum of its first Fourier harmonic $\widehat{\mathbb{V}}_1 $ and its higher harmonic content $\widehat{\mathbb{V}}_{2+}$,
\begin{align}
\widehat{\mathbb{V}} = \widehat{\mathbb{V}}_1 + \widehat{\mathbb{V}}_{2+}.
\end{align}
Then parameterize the first harmonic $\widehat{\mathbb{V}}_1$ using the vectors $\widehat{\mathbb{V}}_1^+,\widehat{\mathbb{V}}_1^-$ according to
\begin{align}
\widehat{\mathbb{V}}_1(\theta) = \widehat{\mathbb{V}}_1^+ \cos\theta + \widehat{\mathbb{V}}_1^- \sin\theta.
\end{align}
Next express $\widehat{\mathbb{V}}_1^+,\widehat{\mathbb{V}}_1^-$ in the moving frame $(\bm{b},\bm{e}_1,\bm{e}_2)$ as
\begin{align}
\widehat{\mathbb{V}}_1^+ =& {u}^+ \bm{b}(\overline{\bm{x}}) + {v}_1^+\bm{e}_1(\overline{\bm{x}}) + {v}_2^+ \bm{e}_2(\overline{\bm{x}}) \\
\widehat{\mathbb{V}}_1^- =& {u}^- \bm{b}(\overline{\bm{x}}) + {v}_1^-\bm{e}_1(\overline{\bm{x}}) + {v}_2^- \bm{e}_2(\overline{\bm{x}}) .
\end{align}
Finally, introduce the components of the \emph{adiabatic velocity}
\begin{align}
{w}_1 = & \frac{1}{2} ({v}_1^+ - {v}_2^-)\\
{w}_2 = & \frac{1}{2} ({v}_2^+ + {v}_1^-),
\end{align}
and the components of the \emph{non-adiabatic velocity}
\begin{align}
{\omega}_1 = &\frac{1}{2} ({v}_1^+ + {v}_2^-) \\
{\omega}_2 = & \frac{1}{2} ({v}_2^+ - {v}_1^-).
\end{align}
This sequence of definitions may be interpreted as a mapping $ (\overline{\bm{x}},\overline{u},\overline{v}_1,\overline{v}_2,\widehat{\bm{\rho}},\widehat{\mathbb{V}},S)\mapsto (\overline{\bm{x}},\overline{u},\overline{v}_1,\overline{v}_2,\widehat{\bm{\rho}},{u}^+,{u}^-,{w}_1,{w}_2,{\omega}_1,{\omega}_2,\widehat{\mathbb{V}}_{2+},S) $. The relationship between the original loop space dynamics phase space variables $(\tilde{\bm{x}},\tilde{\bm{v}},S)$ and the new variables is given explicitly by
\begin{align}
\tilde{\bm{x}} = & \overline{\bm{x}} + \epsilon\widehat{\bm{\rho}}\\
\tilde{\bm{v}} = & (\overline{u}+u^+\cos\theta + u^-\sin\theta)\bm{b}(\overline{\bm{x}})+  \overline{\bm{v}}_\perp\nonumber\\
& +   (\cos\theta \mathbb{I} + \sin\theta \bm{b}_\times)\cdot\bm{\omega}_\perp + (\cos\theta\mathbb{I} - \sin\theta \bm{b}_\times)\cdot \bm{w}_\perp + \widehat{\mathbb{V}}_{2+},
\end{align}
where the following useful shorthand notation has been introduced:
\begin{align}
\overline{\bm{v}}_\perp = & \overline{v}_1\bm{e}_1(\overline{\bm{x}}) + \overline{v}_2 \bm{e}_2(\overline{\bm{x}})\\
\bm{w}_\perp = & w_1 \bm{e}_1(\overline{\bm{x}}) + w_2 \bm{e}_2(\overline{\bm{x}})= \frac{1}{2}(1-\bm{b}\bm{b})\cdot(\widehat{\mathbb{V}}_1^+ + \bm{b}_\times\cdot \widehat{\mathbb{V}}_1^-)\\
\bm{\omega}_\perp = & \omega_1 \bm{e}_1(\overline{\bm{x}}) + \omega_2 \bm{e}_2(\overline{\bm{x}})=\frac{1}{2}(1-\bm{b}\bm{b})\cdot (\widehat{\mathbb{V}}_1^+ - \bm{b}_\times\cdot \widehat{\mathbb{V}}_1^-),
\end{align}
and the tensor $\bm{b}_\times$ is defined by $\bm{b}_\times \cdot \bm{a} = \bm{b}\times \bm{a}$.

The infinitesimal generator for discrete loop space dynamics expressed in terms of the final set of new variables and the scaled phase $\mathcal{S} = \epsilon S$ is given by
\begin{align}
\epsilon\, \partial_t u^+ = & - |\bm{B}|(\overline{\bm{x}}) u^- + \epsilon\,\bigg[ \overline{\bm{v}}\cdot \nabla\bm{b}\cdot(\bm{w}_\perp + \bm{\omega}_\perp) + 2\bm{b}\cdot \fint \cos\theta\,\tilde{\bm{v}}\times\delta\bm{B}\,d\theta  \bigg]\label{uplus_evo}\\
\epsilon\, \partial_t u^- = & |\bm{B}|(\overline{\bm{x}}) u^+ + \epsilon\,\bigg[\overline{\bm{v}}\cdot\nabla\bm{b}\cdot (\bm{w}_\perp\times\bm{b}-\bm{\omega}_\perp\times\bm{b}) + 2\bm{b}\cdot\fint \sin\theta\,\tilde{\bm{v}}\times\delta\bm{B}\,d\theta\bigg] \label{uminus_evo}\\
\epsilon\,\partial_t\omega_1 = & 2|\bm{B}|(\overline{\bm{x}})\,\omega_2 + \epsilon \fint (\cos\theta \bm{e}_1+\sin\theta\bm{e}_2)\cdot\tilde{\bm{V}}\times\delta\bm{B}\,d\theta\nonumber\\
&\hspace{7em}-\epsilon\bigg[\frac{1}{2}\overline{\bm{v}}\cdot\nabla\bm{b}\cdot(u^+\bm{e}_1 + u^-\bm{e}_2) - \overline{\bm{v}}\cdot\bm{R}\,\omega_2\bigg] \label{omega1_evo}\\
\epsilon\,\partial_t\omega_2 = & -2|\bm{B}|(\overline{\bm{x}})\,\omega_1 + \epsilon \fint (\cos\theta\bm{e}_2 - \sin\theta\bm{e}_1)\cdot\tilde{\bm{v}}\times\delta\bm{B}\,d\theta\nonumber\\
&\hspace{7em}-\epsilon\bigg[\frac{1}{2}\overline{\bm{v}}\cdot\nabla\bm{b}\cdot(u^+\bm{e}_2 - u^- \bm{e}_1) + \overline{\bm{v}}\cdot\bm{R}\,\omega_1\bigg]\label{omega2_evo}\\
\epsilon\,\partial_t \overline{v}_1 = & |\bm{B}|(\overline{\bm{x}}) \,\overline{v}_2 + \epsilon \bm{e}_1\cdot\fint \tilde{\bm{v}}\times\delta\bm{B}\,d\theta - \epsilon\bigg[ \overline{u}\,\overline{\bm{v}}\cdot\nabla\bm{b}\cdot\bm{e}_1 - \overline{\bm{v}}\cdot\bm{R}\,\overline{v}_2\bigg]   \label{v1bar_evo}\\
\epsilon\,\partial_t\overline{v}_2 = & -|\bm{B}|(\overline{\bm{x}}) \,\overline{v}_1 + \epsilon \bm{e}_2\cdot\fint \tilde{\bm{v}}\times\delta\bm{B}\,d\theta - \epsilon\bigg[ \overline{u}\,\overline{\bm{v}}\cdot\nabla\bm{b}\cdot\bm{e}_2 + \overline{\bm{v}}\cdot\bm{R}\,\overline{v}_1\bigg] \label{v2bar_evo}\\
\epsilon\, \partial_t \widehat{\mathbb{V}}_{2+} = & \widehat{\mathbb{V}}_{2+}\times\bm{B}(\overline{\bm{x}}) - |\bm{B}|(\overline{\bm{x}})\,\partial_\theta \widehat{\mathbb{V}}_{2+} + \epsilon\, \pi_{2+}\left(\tilde{\bm{v}}\times \delta\bm{B}\right)\label{harmv_evo}\\
\epsilon\, \partial_t\widehat{\bm{\rho}} = & \widehat{\mathbb{V}}- |\bm{B}|(\overline{\bm{x}})\,\partial_\theta\widehat{\bm{\rho}}\label{rho_evo}\\
\partial_t w_1 = &  -\frac{1}{2} \overline{\bm{v}}\cdot\nabla\bm{b}\cdot (u^+\bm{e}_1 - u^-\bm{e}_2) + \overline{\bm{v}}\cdot\bm{R}\, w_2 + \fint (\cos\theta \bm{e}_1 - \sin\theta \bm{e}_2)\cdot\tilde{\bm{v}}\times\delta\bm{B}\,d\theta \label{w1_evo}\\
\partial_t w_2 = &  - \frac{1}{2}\overline{\bm{v}}\cdot\nabla\bm{b}\cdot(u^+ \bm{e}_2 + u^- \bm{e}_1) - \overline{\bm{v}}\cdot\bm{R}\,w_1 + \fint (\cos\theta \bm{e}_2+\sin\theta \bm{e}_1)\cdot\tilde{\bm{v}}\times\delta\bm{B}\,d\theta \label{w2_evo}\\
\partial_t\overline{u} = & \overline{\bm{v}}\cdot\nabla\bm{b}\cdot\overline{\bm{v}} +  \bm{b}\cdot\fint \tilde{\bm{v}}\times\delta\bm{B}\,d\theta\label{ubar_evo}\\
\partial_t\overline{\bm{x}} = & \overline{u}\bm{b}(\overline{\bm{x}}) + \overline{\bm{v}}_\perp\label{xbar_evo}\\
\dot{\mathcal{S}} = & |\bm{B}(\overline{\bm{x}})|
\end{align}
where $\pi_{2+}$ is the $L^2$-orthogonal projection onto the space of Fourier harmonics greater than or equal to $2$, $\bm{R} = (\nabla\bm{e}_1)\cdot\bm{e}_2$, and the symbol $\delta\bm{B}$ is defined as
\begin{align}
\delta\bm{B} = & \int_0^1 \widehat{\bm{\rho}}\cdot\nabla\bm{B}(\overline{\bm{x}}+ \lambda\epsilon \widehat{\bm{\rho}})\,d\lambda.
\end{align}
Note that $\bm{B}(\tilde{\bm{x}}) = \bm{B}(\overline{\bm{x}}) + \epsilon\, \delta\bm{B}$.

\begin{theorem}\label{fast_slow_proof}
When written in terms of the dependent variables \[x=(\overline{\bm{x}},\overline{u},w_1,w_2,\mathcal{S})\] and \[y = (\widehat{\bm{\rho}},\overline{v}_1,\overline{v}_2,u^+,u^-,\omega_1,\omega_2,\widehat{\mathbb{V}}_{2+}),\] loop space dynamics associated with the Lorentz force is a fast-slow dynamical system. 
\end{theorem}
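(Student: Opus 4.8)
\section*{Proof proposal}

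The plan is to check the three requirements in the definition of a fast-slow dynamical system directly against the generator written out in Eqs.\,\eqref{uplus_evo}--\eqref{xbar_evo} and the $\mathcal{S}$-equation. Two of them are a matter of inspection. The equations carrying an explicit $\epsilon$ on the left-hand side are precisely those for the components of $y=(\widehat{\bm{\rho}},\overline{v}_1,\overline{v}_2,u^+,u^-,\omega_1,\omega_2,\widehat{\mathbb{V}}_{2+})$, namely Eqs.\,\eqref{uplus_evo}--\eqref{harmv_evo} together with \eqref{rho_evo}, while the equations for $x=(\overline{\bm{x}},\overline{u},w_1,w_2,\mathcal{S})$, namely Eqs.\,\eqref{w1_evo}--\eqref{xbar_evo} and $\dot{\mathcal{S}}=|\bm{B}(\overline{\bm{x}})|$, carry no such factor; so the required block structure $\epsilon\dot{y}=f_\epsilon(x,y)$, $\dot{x}=g_\epsilon(x,y)$ holds by construction. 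On the right-hand sides the only $\epsilon$-dependence enters through the explicit prefactors and through $\delta\bm{B}=\int_0^1\widehat{\bm{\rho}}\cdot\nabla\bm{B}(\overline{\bm{x}}+\lambda\epsilon\widehat{\bm{\rho}})\,d\lambda$; since $\bm{B}$ is assumed smooth, $\delta\bm{B}$ depends smoothly on $\epsilon$ in a neighbourhood of $0$, hence so do all the components of $f_\epsilon$ and $g_\epsilon$ (and they are jointly smooth in $(x,y)$ once one fixes function-space topologies for the loop-valued components, e.g. Sobolev spaces of mean-zero maps $S^1\to\mathbb{R}^3$, and confines $(\overline{\bm{x}},\mathcal{S})$ to a chart of $Q\times S^1$).

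The substantive requirement is nondegeneracy: $D_yf_0(x,y)$ must be invertible whenever $f_0(x,y)=0$. I would first read off $f_0$ by setting $\epsilon=0$: its $\overline{v}_1,\overline{v}_2,u^+,u^-,\omega_1,\omega_2$ components are $|\bm{B}|\overline{v}_2$, $-|\bm{B}|\overline{v}_1$, $-|\bm{B}|u^-$, $|\bm{B}|u^+$, $2|\bm{B}|\omega_2$, $-2|\bm{B}|\omega_1$ respectively, its $\widehat{\bm{\rho}}$ component is $\widehat{\mathbb{V}}-|\bm{B}|\partial_\theta\widehat{\bm{\rho}}$, and its $\widehat{\mathbb{V}}_{2+}$ component is $\widehat{\mathbb{V}}_{2+}\times\bm{B}(\overline{\bm{x}})-|\bm{B}|\partial_\theta\widehat{\mathbb{V}}_{2+}$ (all $|\bm{B}|$ evaluated at $\overline{\bm{x}}$). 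Since $\widehat{\mathbb{V}}=\widehat{\mathbb{V}}_1+\widehat{\mathbb{V}}_{2+}$ with $\widehat{\mathbb{V}}_1$ depending only on $u^{\pm},\omega_i$ (in $y$) and $w_i$ (in $x$), the map $f_0$ is linear in $y$, so $D_yf_0$ is independent of $y$ and it suffices to prove it is boundedly invertible for every $\overline{\bm{x}}$, using only $|\bm{B}|(\overline{\bm{x}})\neq 0$. Ordering the $y$-components as $(\overline{v}_1,\overline{v}_2,u^+,u^-,\omega_1,\omega_2,\widehat{\mathbb{V}}_{2+},\widehat{\bm{\rho}})$, the only component of $f_0$ that involves $\widehat{\bm{\rho}}$ is the last, through the term $-|\bm{B}|\partial_\theta\widehat{\bm{\rho}}$; hence $D_yf_0=\left(\begin{smallmatrix}A&0\\ C&-|\bm{B}|\partial_\theta\end{smallmatrix}\right)$ is block lower-triangular, and is invertible iff both $A$ (acting on the first seven components) and $-|\bm{B}|\partial_\theta$ (acting on mean-zero $\widehat{\bm{\rho}}$) are.

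The block $A$ is itself block-diagonal: the three $2\times2$ blocks on $(\overline{v}_1,\overline{v}_2)$, $(u^+,u^-)$, $(\omega_1,\omega_2)$ equal $|\bm{B}|$ times an invertible antisymmetric $2\times2$ matrix, with determinant proportional to $|\bm{B}|^2\neq 0$; and the $\widehat{\mathbb{V}}_{2+}$-block is the operator $\bm{c}\mapsto\bm{c}\times\bm{B}-|\bm{B}|\partial_\theta\bm{c}$, which I would diagonalize in the Fourier basis. On the $n$th harmonic it acts as $|\bm{B}|$ times a $3\times3$ matrix whose eigenvalues are $-in$ and $\pm i-in$ (obtained from the eigenvalues $0,\pm i$ of the antisymmetric tensor $\mp\bm{b}_\times$), all nonzero precisely because $|n|\ge 2$ on the range of $\pi_{2+}$, with moduli bounded below by $|\bm{B}|(|n|-1)\ge|\bm{B}|$; hence the block is invertible with bounded inverse. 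Similarly $-|\bm{B}|\partial_\theta$ acts on the $n$th harmonic as multiplication by $-in|\bm{B}|$, invertible for $n\neq 0$ with inverse bounded by $|\bm{B}|^{-1}$. Therefore $D_yf_0$ is invertible, completing the verification; as a by-product, $f_0=0$ uniquely determines $y_0^*(x)$ (so that $\widehat{\bm{\rho}}$ at leading order is the familiar gyroradius loop), consistent with Proposition \ref{SM_existence}.

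The main obstacle -- and the point where the coordinate construction of Section \ref{sub_sec:split} earns its keep -- is not any of these computations but the prior fact that the split has been arranged so that the adiabatic-velocity components $w_1,w_2$ land in the slow block $x$: one must trust the cancellations that removed the would-be $O(\epsilon^{-1})$ gyration terms from Eqs.\,\eqref{w1_evo}--\eqref{w2_evo}, in contrast with the structurally similar but genuinely fast Eqs.\,\eqref{omega1_evo}--\eqref{omega2_evo}. Granting the explicit form of the generator, everything else reduces to inspection of the $\epsilon$-prefactors and the elementary spectral facts that $\bm{b}_\times$ has spectrum $\{0,\pm i\}$ and that $\partial_\theta$ is invertible on mean-zero loops.
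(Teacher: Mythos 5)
Your proposal is correct and follows essentially the same route as the paper: both reduce $D_yf_0$ to decoupled $2\times 2$ blocks on $(\overline{v}_1,\overline{v}_2)$, $(u^+,u^-)$, $(\omega_1,\omega_2)$ plus per-Fourier-harmonic blocks for $\widehat{\mathbb{V}}_{2+}$ and $\widehat{\bm{\rho}}$, inverted using $|\bm{B}|(\overline{\bm{x}})\neq 0$ and $k^2\neq 1$ for $k\geq 2$, and the paper's explicit forward-substitution for $D_yf_0[\delta y]=y_s$ is precisely your block-lower-triangular observation carried out concretely (the paper records the explicit inverse because Proposition~\ref{SM_existence} needs it to compute $y_1^*$). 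The only nitpick is that $f_0$ is affine rather than linear in $y$ (the $\bm{w}_\perp$ contribution to $\widehat{\mathbb{V}}_1$ depends on $x$), which changes nothing since $D_yf_0$ is still independent of $y$.
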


\begin{proof}
By Eqs.\,\eqref{uplus_evo}-\eqref{rho_evo}, the evolution equation for $y$ is $\epsilon\, \dot{y} = f_\epsilon(x,y)$, where $f_\epsilon $ depends smoothly on $\epsilon$ and $f_0$ is given by 
\begin{align}
f_0(x,y) &= (\dot{\widehat{\bm{\rho}}}_0,(\dot{\overline{v}}_1)_0,(\dot{\overline{v}}_1)_0,\dot{u}^+_0,\dot{u}^-_0,(\dot{\omega}_1)_0,(\dot{\omega}_2)_0,(\dot{\widehat{\mathbb{V}}}_{2+})_0)\\
\dot{\widehat{\bm{\rho}}}_0 & =  (u^+\cos\theta + u^-\sin\theta)\bm{b}(\overline{\bm{x}})+   (\cos\theta \mathbb{I} + \sin\theta \bm{b}_\times)\cdot\bm{\omega}_\perp \nonumber\\
&\quad+ (\cos\theta\mathbb{I} - \sin\theta \bm{b}_\times)\cdot \bm{w}_\perp + \widehat{\mathbb{V}}_{2+}     - |\bm{B}|(\overline{\bm{x}})\,\partial_\theta\widehat{\bm{\rho}}\\
(\dot{\overline{v}}_1)_0 & =|\bm{B}|(\overline{\bm{x}}) \,\overline{v}_2 \\
(\dot{\overline{v}}_2)_0 & = -|\bm{B}|(\overline{\bm{x}}) \,\overline{v}_1\\
\dot{u}^+_0 & =  - |\bm{B}|(\overline{\bm{x}}) u^- \\
\dot{u}^-_0 & =   |\bm{B}|(\overline{\bm{x}}) u^+\\
(\dot{\omega}_1)_0 & = 2|\bm{B}|(\overline{\bm{x}})\,\omega_2\\
(\dot{\omega}_2)_0 & = -2|\bm{B}|(\overline{\bm{x}})\,\omega_1\\
(\dot{\widehat{\mathbb{V}}}_{2+})_0 & =  \widehat{\mathbb{V}}_{2+}\times\bm{B}(\overline{\bm{x}}) - |\bm{B}|(\overline{\bm{x}})\,\partial_\theta \widehat{\mathbb{V}}_{2+} .
\end{align}
The derivative $D_yf_0(x,y)$ is therefore given by
\begin{align}
D_y f_0(x,y)[\delta y] =& (\delta\dot{\widehat{\bm{\rho}}}_0,\delta(\dot{\overline{v}}_1)_0,\delta(\dot{\overline{v}}_1)_0,\delta\dot{u}^+_0,\delta\dot{u}^-_0,\delta(\dot{\omega}_1)_0,\delta(\dot{\omega}_2)_0,\delta(\dot{\widehat{\mathbb{V}}}_{2+})_0)  \\
\delta\dot{\widehat{\bm{\rho}}}_0 & =  (\delta u^+\cos\theta + \delta u^-\sin\theta)\bm{b}(\overline{\bm{x}})+   (\cos\theta \mathbb{I} + \sin\theta \bm{b}_\times)\cdot\delta \bm{\omega}_\perp \nonumber\\
&\quad+ \delta \widehat{\mathbb{V}}_{2+}     - |\bm{B}|(\overline{\bm{x}})\,\partial_\theta\delta\widehat{\bm{\rho}}\label{delta_rho}\\
\delta(\dot{\overline{v}}_1)_0 & =|\bm{B}|(\overline{\bm{x}}) \,\delta\overline{v}_2 \label{delta_v1}\\
\delta(\dot{\overline{v}}_2)_0 & = -|\bm{B}|(\overline{\bm{x}}) \,\delta\overline{v}_1\\
\delta\dot{u}^+_0 & =  - |\bm{B}|(\overline{\bm{x}}) \delta u^- \\
\delta\dot{u}^-_0 & =   |\bm{B}|(\overline{\bm{x}}) \delta u^+\\
\delta(\dot{\omega}_1)_0 & = 2|\bm{B}|(\overline{\bm{x}})\,\delta \omega_2\\
\delta(\dot{\omega}_2)_0 & = -2|\bm{B}|(\overline{\bm{x}})\,\delta \omega_1\label{delta_omega2}\\
\delta(\dot{\widehat{\mathbb{V}}}_{2+})_0 & =  \delta \widehat{\mathbb{V}}_{2+}\times\bm{B}(\overline{\bm{x}}) - |\bm{B}|(\overline{\bm{x}})\,\partial_\theta \delta\widehat{\mathbb{V}}_{2+} .\label{delta_vharm}
\end{align}
In order to assess the invertibility of $D_y f_0(x,y)$, first note that the space $Y$ is given by
\begin{align}
Y = \widetilde{\ell Q} \times\mathbb{R}^2\times \mathbb{R}^2\times \mathbb{R}^2 \times \ell_{2+}\mathbb{R}^3,
\end{align}
where $\ell_{2+}\mathbb{R}^3$ is the set of loops in $\mathbb{R}^3$ with zero'th and first Fourier harmonics equal to $0$.
Now fix an arbitrary $y_s\in Y$ (``$s$'' stands for ``source") with components
\begin{align}
y_s = (\widehat{\bm{\rho}}_s,(\overline{v}_1)_s,(\overline{v}_2)_s, u^+_s,u^-_s,(\omega_1)_s,(\omega_2)_s,(\widehat{\mathbb{V}}_{2+})_s),
\end{align}
and consider solving the equation $D_y f_0(x,y)[\delta y] = y_s $ for $\delta y$. By Eqs.\,\eqref{delta_v1}-\eqref{delta_omega2}, $(\delta \overline{v}_1,\delta \overline{v}_2)$, $(\delta u^+,\delta u^-)$, and $(\delta\omega_1,\delta\omega_2)$ may be expressed in terms of $y_s$ by solving three separate $2\times 2$ matrix equations, giving the result
\begin{align}
\delta\overline{v}_1 = &-\frac{1}{|\bm{B}|(\overline{\bm{x}})} (\overline{v}_2)_s\label{inv_v1bar}\\
\delta\overline{v}_2 = &\frac{1}{|\bm{B}|(\overline{\bm{x}})} (\overline{v}_1)_s\label{inv_v2bar}\\
\delta u^+ = &\frac{1}{|\bm{B}|(\overline{\bm{x}})} u^-_s\label{inv_uplus}\\
\delta u^- = &-\frac{1}{|\bm{B}|(\overline{\bm{x}})} u^+_s\label{inv_uminus}\\
\delta\omega_1 = &-\frac{1}{2|\bm{B}|(\overline{\bm{x}})}(\omega_2)_s\label{inv_omega1}\\
\delta\omega_2 = &\frac{1}{2|\bm{B}|(\overline{\bm{x}})}(\omega_1)_s.\label{inv_omega2}
\end{align}
In particular, $\delta\bm{\omega}_\perp = \delta\omega_1\bm{e}_1+\delta\omega_2\bm{e}_2 = \frac{1}{2|\bm{B}|(\overline{\bm{x}})}\bm{b}\times (\bm{\omega}_\perp)_s$. By Eq.\,\eqref{delta_vharm}, the Fourier harmonics of $\delta\widehat{\mathbb{V}}_{2+}$ are determined by the sequence of equations
\begin{align}
(\widehat{\mathbb{V}}_k^+)_s = & |\bm{B}|(\overline{\bm{x}}) \delta\widehat{\mathbb{V}}^+_k\times\bm{b} - k |\bm{B}|(\overline{\bm{x}}) \delta\widehat{\mathbb{V}}_k^-\\
(\widehat{\mathbb{V}}_k^-)_s=&|\bm{B}|(\overline{\bm{x}}) \delta\widehat{\mathbb{V}}^-_k\times\bm{b} + k |\bm{B}|(\overline{\bm{x}}) \delta\widehat{\mathbb{V}}_k^+,
\end{align}
where $k$ is any integer greater than or equal to $2$. For each $k$, this linear system may be solved for $(\delta\widehat{\mathbb{V}}_k^+,\delta\widehat{\mathbb{V}}_k^-)$, giving
\begin{align}
\delta\widehat{\mathbb{V}}_{k}^+ =  & \frac{1}{k|\bm{B}|(\overline{\bm{x}})}\bigg( \bm{b}\bm{b}\cdot  (\widehat{\mathbb{V}}_k^-)_s+ \frac{k^2}{k^2-1} (\bm{b}\times (\widehat{\mathbb{V}}_k^-)_s)\times\bm{b} + \frac{k}{k^2-1}(\widehat{\mathbb{V}}_k^+)_s\times\bm{b}\bigg)\label{inv_deltav_plus}\\
\delta\widehat{\mathbb{V}}_k^- = & -\frac{1}{k|\bm{B}|(\overline{\bm{x}})}\bigg( \bm{b}\bm{b}\cdot  (\widehat{\mathbb{V}}_k^+)_s+ \frac{k^2}{k^2-1} (\bm{b}\times (\widehat{\mathbb{V}}_k^+)_s)\times\bm{b} - \frac{k}{k^2-1}(\widehat{\mathbb{V}}_k^-)_s\times\bm{b}\bigg).\label{inv_deltav_minus}
\end{align}
Finally, upon decomposing Eq.\,\eqref{delta_rho} into first- and higher-order harmonics, the first-order harmonics of $\delta\widehat{\bm{\rho}}$ may be expressed as
\begin{align}
\delta\widehat{\bm{\rho}}_1^+ = &\quad \frac{1}{|\bm{B}|(\overline{\bm{x}})}(\widehat{\bm{\rho}}_s)^-_1 + \frac{1}{|\bm{B}|^2(\overline{\bm{x}})} u_s^+\bm{b} + \frac{1}{2|\bm{B}|^2(\overline{\bm{x}})}(\bm{\omega}_\perp)_s\label{inv_rho1_plus}\\
\delta\widehat{\bm{\rho}}_1^- = & -\frac{1}{|\bm{B}|(\overline{\bm{x}})}(\widehat{\bm{\rho}}_s)^+_1 + \frac{1}{|\bm{B}|^2(\overline{\bm{x}})} u_s^-\bm{b} + \frac{1}{2|\bm{B}|^2(\overline{\bm{x}})}\bm{b}\times(\bm{\omega}_\perp)_s,\label{inv_rho1_minus}
\end{align}
and the $k$'th-order harmonics ($k\geq 2$) of $\delta\widehat{\bm{\rho}}$ may be expressed as
\begin{align}
\delta\widehat{\bm{\rho}}_k^+ = & -\frac{1}{|\bm{B}|(\overline{\bm{x}})} (\widehat{\bm{\rho}}_s)^+_k + \frac{1}{k|\bm{B}|(\overline{\bm{x}})}\delta\widehat{\mathbb{V}}_k^+\label{inv_rhok_plus}\\
\delta\widehat{\bm{\rho}}_k^- = &\quad \frac{1}{|\bm{B}|(\overline{\bm{x}})} (\widehat{\bm{\rho}}_s)^-_k - \frac{1}{k|\bm{B}|(\overline{\bm{x}})}\delta\widehat{\mathbb{V}}_k^-,\label{inv_rhok_minus}
\end{align}
with $\delta\widehat{\mathbb{V}}_k^+$ and $\delta\widehat{\mathbb{V}}_k^-$ given in Eqs.\,\eqref{inv_deltav_plus} and \eqref{inv_deltav_minus}. It is now apparent that $D_y f_0(x,y)$ is invertible for all $(x,y)$ with explicit inverse given by Eqs.\,\eqref{inv_v1bar} - \eqref{inv_omega2}, \eqref{inv_deltav_plus} - \eqref{inv_deltav_minus}, \eqref{inv_rho1_plus}-\eqref{inv_rho1_minus}, and \eqref{inv_rhok_plus}-\eqref{inv_rhok_minus}.
\end{proof}

\begin{remark}
The factor of $2$ appearing in $(\dot{\omega}_1)_0$ and $(\dot{\omega}_2)_0$ is caused by spinning the loops by the phase $S$.
\end{remark}

\subsection{Finding The Slow Manifold in Loop Space\label{slow_man_sub}}
Now that loop space dynamics associated with the Lorentz force have been identified with a fast-slow system, Proposition \ref{SM_existence} implies that there is a unique formal slow manifold in loop space given by the formal series $y_\epsilon^* = y_0^* +\epsilon y_1^* + \epsilon^2 y_2^* +\dots$. Interestingly, because Theorem \ref{fast_slow_proof} shows that the slow variable $x$ lives in a finite-dimensional space, this formal slow manifold is finite-dimensional. (The dimension is $7$.) Therefore the series $y_\epsilon^*$ may be interpreted as describing the shape of rigid loops in loop space. The term ``rigid" is appropriate in this case because the loops on the formal slow manifold are determined by only $6$ real parameters. (The loop shape is independent of $\mathcal{S}$.) The first two terms in shape function series $y_\epsilon^* = y_0^* +\epsilon y_1^* + \epsilon^2 y_2^* +\dots$ may be computed as follows.

\subsubsection{The Leading-Order Shape Function $y_0^*$}
According to Eq.\,\eqref{shape_zero} in Proposition \ref{SM_existence}, the leading-order shape functions for the slow loops in loop phase space, i.e. $y_0^*$, are given by
\begin{align}
\widehat{\bm{\rho}}^*_0(\theta) = & \sin\theta\, \frac{\bm{w}_\perp}{|\bm{B}(\overline{\bm{x}})|} - \cos\theta\,\frac{\bm{w}_\perp\times\bm{b}(\overline{\bm{x}})}{|\bm{B}(\overline{\bm{x}})|}\\
(\overline{\bm{v}}_\perp)^*_0 = & 0\\
(u^+)^*_0 = & 0\\
(u^-)^*_0 = & 0\\
(\bm{\omega}_\perp)_0^* = & 0\\
(\widehat{\mathbb{V}}_{2+})^*_0 = & 0.
\end{align}
The leading-order dynamics on the formal slow manifold, i.e. the $\epsilon\rightarrow 0$ limit of $\dot{x} = g_\epsilon(x,y_\epsilon^*(x))$, are therefore governed by
\begin{align}
\partial_t w_1 = &  \overline{u}\,\bm{b}(\overline{\bm{x}})\cdot\bm{R}\, w_2 + \frac{1}{2} w_1 \overline{u}\,\bm{b}\cdot\nabla\text{ln}|\bm{B}| + \frac{1}{2}w_2 \overline{u}\,\bm{b}\cdot\nabla\times\bm{b} \label{w1_evo_slow0}\\
\partial_t w_2 = &   - \overline{u}\,\bm{b}(\overline{\bm{x}})\cdot\bm{R}\,w_1 -\frac{1}{2}w_1 \overline{u}\,\bm{b}\cdot\nabla\times\bm{b} + \frac{1}{2}w_2\overline{u}\,\bm{b}\cdot\nabla\text{ln}|\bm{B}| \label{w2_evo_slow0}\\
\partial_t\overline{u} = &-\frac{1}{2}\frac{|\bm{w}_\perp|^2}{|\bm{B}|}\bm{b}\cdot\nabla|\bm{B}|\label{ubar_evo_slow0}\\
\partial_t\overline{\bm{x}} = & \overline{u}\bm{b}(\overline{\bm{x}}).\label{xbar_evo_slow0}\\
\dot{\mathcal{S}} = & |\bm{B}(\overline{\bm{x}})|
\end{align}
Note that these leading-order slow evolution equations have the exact conservation laws
\begin{align}
\partial_t\frac{|\bm{w}_\perp|^2}{2|\bm{B}|} = 0,
\end{align}
corresponding the conservation of action, and 
\begin{align}
\partial_t \left(\frac{1}{2}\overline{u}^2 + \frac{1}{2}|\bm{w}_\perp|^2\right) = 0,
\end{align}
corresponding to the conservation of energy. These conservation laws may be recovered as limiting forms of the exact conservation laws for action and energy for loop space dynamics. It follows that Eqs.\,\eqref{ubar_evo_slow0} and \eqref{xbar_evo_slow0} are identical to the leading-order equations describing guiding center dynamics. (See the discussion below Eq.\,(15) in Ref.\,\onlinecite{Littlejohn_bounce_1982})

\subsubsection{The First-Order Shape Function $y_1^*$}
Again referring to Proposition \ref{SM_existence}, the first-order shape functions are determined by the system of equations
\begin{gather}
0 = -|\bm{B}(\overline{\bm{x}})|\,(u^-)^*_1 + \overline{u}\bm{b}(\overline{\bm{x}})\cdot\nabla\bm{b}\cdot\bm{w}_\perp + 2\bm{b}\cdot \fint\cos\theta\,\tilde{\bm{v}}_0\times \delta\bm{B}_0\,d\theta\\
0 = |\bm{B}(\overline{\bm{x}})|\,(u^+)^*_1 + \overline{u}\,\bm{b}(\overline{\bm{x}})\cdot\nabla\bm{b}\cdot(\bm{w}_\perp\times\bm{b}) + 2\bm{b}\cdot\fint\sin\theta\,\tilde{\bm{v}}_0\times \delta\bm{B}_0\,d\theta\\
0 = 2|\bm{B}(\overline{\bm{x}})|(\omega_2)_1^* + \fint (\cos\theta\bm{e}_1 +\sin\theta\bm{e}_2)\cdot\tilde{\bm{v}}_0\times\delta\bm{B}_0\,d\theta\\
0 = - 2|\bm{B}(\overline{\bm{x}})|(\omega_1)_1^* +\fint (\cos\theta\bm{e}_2-\sin\theta\bm{e}_1)\cdot\tilde{\bm{v}}_0\times\delta\bm{B}_0\,d\theta\\
0 = |\bm{B}(\overline{\bm{x}})| (\overline{v}_2)_1^* + \bm{e}_1\cdot\fint \tilde{\bm{v}}_0\times\delta\bm{B}_0\,d\theta - \overline{u}^2\bm{b}\cdot\nabla\bm{b}\cdot\bm{e}_1\\
0 = -|\bm{B}(\overline{\bm{x}})| (\overline{v}_1)_1^* + \bm{e}_2\cdot\fint \tilde{\bm{v}}_0\times\delta\bm{B}_0\,d\theta - \overline{u}^2\bm{b}\cdot\nabla\bm{b}\cdot\bm{e}_2\\
 0  = (\widehat{\mathbb{V}}_{2+})^*_1\times\bm{B}(\overline{\bm{x}}) - |\bm{B}(\overline{\bm{x}})| \partial_\theta  (\widehat{\mathbb{V}}_{2+})^*_1 + \pi_{2+}\left(\tilde{\bm{v}}_0\times \delta\bm{B}_0\right)\\
 \frac{\overline{u}}{|\bm{B}|}\cos\theta\bigg(\frac{1}{2}\tau \bm{w}_\perp + \frac{1}{2}k_\parallel \bm{w}_\perp\times\bm{b} - \bm{w}_\perp\times\bm{\kappa}\bigg)\nonumber\\
 + \frac{\overline{u}}{|\bm{B}|}\sin\theta\bigg(-\frac{1}{2}k_\parallel \bm{w}_\perp + \frac{1}{2}\tau \bm{w}_\perp\times\bm{b} - \bm{\kappa}\cdot\bm{w}_\perp \bm{b}\bigg)\nonumber\\
  = [ (u^+)_1^*\cos\theta + (u^-)_1^*\sin\theta]\bm{b}(\overline{\bm{x}}) + (\cos\theta\mathbb{I} + \sin\theta \bm{b}_\times)\cdot (\bm{\omega}_\perp)_1^*  \nonumber\\
 + (\widehat{\mathbb{V}}_{2+})_1^* - |\bm{B}| \partial_\theta \widehat{\bm{\rho}}_1^*,
\end{gather}
where $0$ in a subscript denotes evaluation using the leading-order shape function $y_0^*$, and I have introduced the useful shorthand notation
\begin{align}
\tau& = \bm{b}\cdot\nabla\times\bm{b}\\
\bm{\kappa} &= \bm{b}\cdot\nabla\bm{b}\\
k_\parallel &  = \bm{b}\cdot\nabla \text{ln}|\bm{B}|
\end{align}

The solution of these equations may be found with the help of the identities
\begin{align}
\widetilde{\bm{F}}_L = & \tilde{\bm{v}}_0\times \delta\bm{B}_0\nonumber\\
			        = & \overline{u}\bm{b}\times (\widehat{\bm{\rho}}_0^*\cdot\nabla\bm{B}) + \widehat{\bm{\rho}}_0\widehat{\bm{\rho}}_0: \nabla\bm{B}\,\bm{B} - |\bm{B}|\widehat{\bm{\rho}}_0\widehat{\bm{\rho}}_0\cdot\nabla|\bm{B}|\\
\fint \widetilde{\bm{F}}_L\,d\theta = & - \frac{|\bm{w}_\perp|^2}{2|\bm{B}|}\,\nabla|\bm{B}|\\
\fint \cos\theta\widetilde{\bm{F}}_L\,d\theta = & -\frac{\overline{u}}{2|\bm{B}|} \bm{b}\times ([\bm{w}_\perp\times\bm{b}]\cdot\nabla\bm{B})\\
\fint\sin\theta\widetilde{\bm{F}}_L\,d\theta = & \frac{\overline{u}}{2|\bm{B}|}\bm{b}\times (\bm{w}_\perp\cdot\nabla\bm{B})\\
\pi_{2+}(\widetilde{\bm{F}}_L) = & \frac{|\bm{w}_\perp|^2}{2|\bm{B}|}\bigg([\bm{c}\bm{c} - \bm{a}\bm{a}]:(\nabla\bm{B}\,\bm{b} -\nabla|\bm{B}|\mathbb{I}) \bigg)\,\cos 2\theta\nonumber\\
   & -  \frac{|\bm{w}_\perp|^2}{2|\bm{B}|}\bigg([\bm{a}\bm{c} + \bm{c}\bm{a}]:(\nabla\bm{B}\,\bm{b} -\nabla|\bm{B}|\mathbb{I})\bigg)\,\sin 2\theta
\end{align}
where $\bm{a} = \bm{w}_\perp/|\bm{w}_\perp|$ and $\bm{c} = \bm{w}_\perp\times\bm{b}/|\bm{w}_\perp|$. Explicitly, the solution is given by
\begin{align}
(u^+)_1^* = & - \frac{\overline{u}\bm{\kappa}\cdot \bm{w}_\perp\times\bm{b}}{|\bm{B}|}\\
(u^-)_1^* = & \frac{\overline{u}\bm{\kappa}\cdot\bm{w}_\perp}{|\bm{B}|}\\
(\omega_1)^*_1 = & \frac{\overline{u}}{4|\bm{B}|^2} \bm{w}_\perp\cdot \left( \nabla\bm{B}\cdot\bm{b}_\times - \bm{b}_\times \cdot\nabla\bm{B}\right)\cdot\bm{e}_1\\
(\omega_2)^*_1 = & \frac{\overline{u}}{4|\bm{B}|^2} \bm{w}_\perp\cdot \left( \nabla\bm{B}\cdot\bm{b}_\times - \bm{b}_\times \cdot\nabla\bm{B}\right)\cdot\bm{e}_2\\
(\overline{v}_1)^*_1 = &  -\frac{(\mu_0\nabla|\bm{B}| +\overline{u}^2 \bm{\kappa})\times\bm{b}}{|
\bm{B}|}\cdot\bm{e}_1\label{v1bar_star1}\\
(\overline{v}_2)^*_1 = &  -\frac{(\mu_0\nabla|\bm{B}| + \overline{u}^2\bm{\kappa})\times\bm{b}}{|
\bm{B}|}\cdot\bm{e}_2\label{v2bar_star1}\\
(\widehat{\mathbb{V}}_2^+)^*_1 = & \frac{1}{2}\mu_0[\bm{a}\bm{c} + \bm{c}\bm{a}]:\nabla\bm{b}\,\bm{b} + \mu_0 [\bm{a}\bm{c}+\bm{c}\bm{a}]\cdot\nabla\text{ln}|\bm{B}|\\
(\widehat{\mathbb{V}}_2^-)^*_1 = & \frac{1}{2}\mu_0[\bm{c}\bm{c}-\bm{a}\bm{a}]:\nabla\bm{b}\,\bm{b} + \mu_0[\bm{c}\bm{c}-\bm{a}\bm{a}]\cdot\nabla\text{ln}|\bm{B}|\\
\widehat{\bm{\rho}}^*_1 = & -\frac{\overline{u}|\bm{w}_\perp|}{|\bm{B}|^2}\bigg(\frac{1}{4}\bm{a}\cdot(\nabla\bm{b} + \bm{b}_\times\cdot\nabla\bm{b}\cdot\bm{b}_\times) + \frac{1}{2}k_\parallel \bm{a} - \frac{1}{2}\tau\bm{c} + 2\bm{\kappa}\cdot\bm{a}\,\bm{b}\bigg)\cos\theta\nonumber\\
& +\frac{\overline{u}|\bm{w}_\perp|}{|\bm{B}|^2}\bigg(\frac{1}{4}\bm{a}\cdot(\nabla\bm{b}\cdot\bm{b}_\times -\bm{b}_\times\cdot\nabla\bm{b}) -\frac{1}{2}\tau \bm{a} - \frac{1}{2}k_\parallel \bm{c} - 2\bm{\kappa}\cdot\bm{c}\,\bm{b}\bigg)\sin\theta\nonumber\\
& -\frac{1}{4}\frac{|\bm{w}_\perp|^2}{|\bm{B}|^2} \bigg(\frac{1}{2}[\bm{c}\bm{c}-\bm{a}\bm{a}]:\nabla\bm{b}\,\bm{b} + [\bm{c}\bm{c}-\bm{a}\bm{a}]\cdot\nabla\text{ln}|\bm{B}|\bigg)\cos2\theta\nonumber\\
& +\frac{1}{4}\frac{|\bm{w}_\perp|^2}{|\bm{B}|^2}\bigg(\frac{1}{2}[\bm{a}\bm{c}+\bm{c}\bm{a}]:\nabla\bm{b}\,\bm{b} + [\bm{a}\bm{c}+\bm{c}\bm{a}]\cdot\nabla\text{ln}|\bm{B}|\bigg)\sin 2\theta,
\end{align}
where $(\widehat{\mathbb{V}}_{2+})^*_1 = (\widehat{\mathbb{V}}_2^+)^*_1\,\cos\theta + (\widehat{\mathbb{V}}_2^-)^*_1\,\sin\theta$.
Note in particular that Eqs.\,\eqref{v1bar_star1}-\eqref{v2bar_star1} lead to the following improved expression for the time derivative of $\overline{\bm{x}}$ on the formal slow manifold:
\begin{align}
\dot{\overline{\bm{x}}} = \overline{u}\,\bm{b}(\overline{\bm{x}})  - \epsilon \frac{(\mu_0\nabla|\bm{B}| + \overline{u}^2\bm{\kappa})\times\bm{b}}{|\bm{B}|}+ O(\epsilon^2).
\end{align}
The $O(\epsilon)$ correction term reproduces the famous $\nabla B$ and curvature drifts from guiding center theory.\cite{Northrop_1963} Thus, evidence is mounting that loop dynamics restricted to the formal slow manifold is closely related to guiding center dynamics. 

\subsection{Slow Loops Move as Guiding Centers\label{gc_proof}}
Apparently an explanation is required for the low-order coincidence of guiding center dynamics with loop space dynamics on the formal slow manifold. In order to show that motion on the formal slow manifold in loop space is in fact \emph{equivalent} to guiding center dynamics to all orders in perturbation theory, it is useful to draw upon Kruskal's description\cite{Kruskal_1962} of guiding center dynamics based on near-identity coordinate transformations. In Kruskal's approach, first a set of coordinates $(\zeta,\xi^1,\dots,\xi^5)$ on the $6$-dimensional phase space for a single charged particle is found with the following property: in these coordinates, the Lorentz force equation takes the form
\begin{align}
\dot{\zeta} =& \frac{1}{\epsilon}\Omega_{-1}(\bm{\xi}) + \Omega_0(\zeta,\bm{\xi})\\
\dot{\bm{\xi}} = & \bm{U}_0(\zeta,\bm{\xi}),
\end{align}
where $\bm{\xi} = (\xi^1,\dots, x^5)$ and $\Omega_{-1}$ is nowhere vanishing. Here $\zeta$ is an angular variable, and the functions $\Omega_0,\bm{U}_0$ are $2\pi$-periodic in $\zeta$. Next a sequence of near-identity coordinate transformations $\Phi^{(N)}_\epsilon:(\zeta,\bm{\xi})\mapsto (\overline{\zeta}_N,\overline{\bm{\xi}}_N)$ is found that decouples $\zeta$ from $\bm{\xi}$ with increasing accuracy. Here near-identity means $\Phi^{(N)}_0 = \text{id}$ for each $N$ and $(\Phi^{(N)}_\epsilon)^{-1}\circ \Phi^{(N+1)}_\epsilon = \text{id} + O(\epsilon^N)$. In addition, approximate decoupling means that in the new coordinates $ (\overline{\zeta}_N,\overline{\bm{\xi}}_N)$ the Lorentz force equation takes the form
\begin{align}
\dot{\overline{\zeta}}_N = & \frac{1}{\epsilon} \Omega_{-1}(\overline{\bm{\xi}}_N) + \overline{\Omega}_\epsilon^{(N)}(\overline{\bm{\xi}}_N) + \epsilon^N \delta\Omega^{(N)}_\epsilon(\overline{\zeta}_N,\overline{\bm{\xi}}_N)\label{partial_trans_one}\\
\dot{\overline{\bm{\xi}}}_N = & \overline{U}_\epsilon^{(N)}(\overline{\bm{\xi}}_N) + \epsilon^N\delta U^{(N)}_\epsilon(\overline{\zeta}_N,\overline{\bm{\xi}}_N),\label{partial_trans_two}
\end{align}
where $\delta\Omega^{(N)}_\epsilon,\delta U^{(N)}_\epsilon = O(1)$ as $\epsilon\rightarrow 0$.
Assuming the magnetic field is $C^\infty$, the integer $N$ may in principle be made as large as one would like. Therefore the difference between Eqs.\,\eqref{partial_trans_one} -\eqref{partial_trans_two} and those same equations with $\delta\Omega^{(N)}_\epsilon,\delta U^{(N)}_\epsilon = 0$ may be be made arbitrarily small. The system of equations given by dropping $\delta\Omega^{(N)}_\epsilon,\delta U^{(N)}_\epsilon$ are \emph{the guiding center equations of order} $N$. 

By considering all values of $N$, the transformations $\Phi_\epsilon^{(N)}$ define a formal transformation $\Phi_\epsilon:(\zeta,\bm{\xi})\mapsto (\underline{\zeta},\underline{\bm{\xi}})$ that decouples $\underline{\zeta}$ from $\underline{\bm{\xi}}$ to all orders in $\epsilon$. The formal power series for the transformed Lorentz force infinitesimal generator is given by
\begin{align}
\dot{\underline{\zeta}} &= \frac{1}{\epsilon} \Omega_{-1}(\underline{\bm{\xi}}) + \overline{\Omega}_\epsilon(\underline{\bm{\xi}})\label{all_orders_gc_one}\\
\dot{\underline{\bm{\xi}}} & = \overline{U}_\epsilon(\underline{\bm{\xi}}),\label{all_orders_gc_two}
\end{align}
where $\overline{\Omega}_\epsilon,\overline{U}_\epsilon$ are each formal power series in $\epsilon$. Equations \eqref{all_orders_gc_one} - \eqref{all_orders_gc_two} are \emph{the all-orders guiding center equations}. The first $N$ terms in the all-orders guiding center equations agree with the guiding center equations of order $N$ for each $N$. I will now demonstrate that the formal slow dynamics on the formal slow manifold in loop space agrees with the all-orders guiding center equations.

As a way to motivate my argument, consider first the following characterization of loop space dynamics associated with the Lorentz force in terms of the special coordinate systems used in Kruskal's theory.

\begin{lemma}\label{equivalence_lemma}
The coordinates $(\zeta,\bm{\xi})$ for the Lorentz force may be chosen so that loop space dynamics associated with the Lorentz force are equivalent to 
\begin{align}
\partial_t\widetilde{\zeta}(\theta,t) + \frac{1}{\epsilon} \omega_c(\widetilde{\zeta}(t),\widetilde{\bm{\xi}}(t))\,\partial_\theta\widetilde{\zeta}(\theta,t)= & \frac{1}{\epsilon}\Omega_{-1}(\widetilde{\bm{\xi}}(\theta,t)) + \Omega_0(\widetilde{\zeta}(\theta,t),\widetilde{\bm{\xi}}(\theta,t)) \\
\partial_t\widetilde{\bm{\xi}}(\theta,t) + \frac{1}{\epsilon}  \omega_c(\widetilde{\zeta}(t),\widetilde{\bm{\xi}}(t))\,\partial_\theta\widetilde{\bm{\xi}}(\theta,t) = & U_0(\widetilde{\zeta}(\theta,t),\widetilde{\bm{\xi}}(\theta,t))\\
\omega_c(\widetilde{\zeta},\widetilde{\bm{\xi}}) =&  \Omega_{-1}\left(\fint \widetilde{\bm{\xi}}(\theta^\prime) \,d\theta^\prime\right)\label{omegac_def}\\
\dot{S}(t) = & \frac{1}{\epsilon}\omega_c(\widetilde{\zeta}(t),\widetilde{\bm{\xi}}(t)),
\end{align}
where $\widetilde{\zeta}(\theta)$ may be chosen to be of the form $\widetilde{\zeta}(\theta) = \theta + \widetilde{\nu}(\theta)$ with a single-valued $\widetilde{\nu}$. Equivalently, the transformed loop $(\widetilde{\zeta}_N(\theta),\widetilde{\bm{\xi}}_N(\theta)) = \Phi^{(N)}_\epsilon(\widetilde{\zeta}(\theta),\widetilde{\bm{\xi}}(\theta))$ satisfies
\begin{gather}
\partial_t\widetilde{\zeta}_N(\theta,t) + \frac{1}{\epsilon}\omega_c\left((\widetilde{\Phi}^{(N)}_\epsilon)^{-1}(\widetilde{\zeta}_N(t),\widetilde{\bm{\xi}}_N(t))\right)\,\partial_\theta \widetilde{\zeta}_N(\theta,t) =\nonumber\\ \frac{1}{\epsilon} \Omega_{-1}(\widetilde{\bm{\xi}}_N(\theta,t)) + \overline{\Omega}_\epsilon^{(N)}(\widetilde{\bm{\xi}}_N(\theta,t)) + \epsilon^N \delta\Omega_\epsilon^{(N)}(\widetilde{\zeta}(\theta,t),\widetilde{\bm{\xi}}_N(\theta,t))\\
\partial_t\widetilde{\bm{\xi}}_N(\theta,t) +  \frac{1}{\epsilon}\omega_c\left((\widetilde{\Phi}^{(N)}_\epsilon)^{-1}(\widetilde{\zeta}_N(t),\widetilde{\bm{\xi}}_N(t))\right)\,\partial_\theta \widetilde{\bm{\xi}}_N(\theta,t) = \nonumber\\\overline{U}_\epsilon^{(N)}(\widetilde{\bm{\xi}}_N(\theta,t)) + \epsilon^N \delta U_\epsilon^{(N)}(\widetilde{\zeta}_N(\theta,t),\widetilde{\bm{\xi}}_N(\theta,t) )\\
\dot{S}(t) = \frac{1}{\epsilon}\omega_c\left((\widetilde{\Phi}^{(N)}_\epsilon)^{-1}(\widetilde{\zeta}_N(t),\widetilde{\bm{\xi}}_N(t))\right)
\end{gather}
where the loop $(\widetilde{\Phi}^{(N)}_\epsilon)^{-1}(\widetilde{\zeta}_N(t),\widetilde{\bm{\xi}}_N(t))(\theta) = (\Phi_\epsilon^{(N)})^{-1}(\widetilde{\zeta}_N(\theta,t),\widetilde{\bm{\xi}}_N(\theta,t))$.
\end{lemma}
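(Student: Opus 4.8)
The plan is to treat loop space dynamics as a \emph{natural} construction and compute it directly in Kruskal's coordinates. The engine is a functoriality property implicit in Section~\ref{sec:abstract_loops}: if $\psi\colon P\to P'$ is a diffeomorphism with $\psi_{*}X=X'$, and $\Omega'\colon\ell P'\to\mathbb{R}$ is any phase-shift-invariant functional, then, setting $\Omega=\Omega'\circ\widetilde{\psi}$ where $\widetilde{\psi}(\widetilde{z})(\theta)=\psi(\widetilde{z}(\theta))$, the induced map $\widetilde{\psi}\times\text{id}_{S^{1}}$ pushes the loop space generator $\widetilde{X}^{\Omega}$ forward to $\widetilde{X}^{\Omega'}$. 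I would prove this from the closed form $\widetilde{X}^{\Omega}(\widetilde{z},S)=\widetilde{X}_{0}(\widetilde{z})-\Omega(\widetilde{z})\,\partial_{\theta}\widetilde{z}+\Omega(\widetilde{z})\,\partial_{S}$ recorded in the excerpt: $\widetilde{\psi}$ intertwines the loop-parallelized generators $\widetilde{X}_{0},\widetilde{X}_{0}'$ because $X,X'$ are intertwined pointwise in $\theta$; $T\widetilde{\psi}$ carries the tangent vector $\partial_{\theta}\widetilde{z}$ to $\partial_{\theta}(\psi\circ\widetilde{z})$, again because $\widetilde{\psi}$ acts pointwise and hence commutes with reparameterizations of $\theta$; and $\Omega$ is by definition the pullback of $\Omega'$. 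The same pointwise observation shows the phase shifts and the spinning map $\Phi\colon(\widetilde{z}_{0},S)\mapsto(\widetilde{z}_{0}^{-S},S)$ commute with $\widetilde{\psi}\times\text{id}_{S^{1}}$, which is what lets the argument descend to the full $\ell P\times S^{1}$ construction rather than remaining formal.

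Granting this, the first displayed system is immediate. Work in a region where Kruskal's coordinates $(\zeta,\bm\xi)$ are defined, so the Lorentz force generator reads $X=\big(\tfrac{1}{\epsilon}\Omega_{-1}(\bm\xi)+\Omega_{0}(\zeta,\bm\xi)\big)\partial_{\zeta}+\bm{U}_{0}(\zeta,\bm\xi)\cdot\partial_{\bm\xi}$, and choose the frequency functional $\Omega=\tfrac{1}{\epsilon}\omega_{c}$ with $\omega_{c}$ as in \eqref{omegac_def}. This functional is phase-shift invariant because the $\theta$-average in \eqref{omegac_def} annihilates phase shifts. Applying the functoriality statement with $\psi$ the change from $(\bm{x},\bm{v})$ to $(\zeta,\bm\xi)$ shows that loop space dynamics associated with the Lorentz force, written in $(\zeta,\bm\xi)$, is $\partial_{t}\widetilde{z}+\tfrac{1}{\epsilon}\omega_{c}\,\partial_{\theta}\widetilde{z}=X(\widetilde{z})$ together with $\dot{S}=\tfrac{1}{\epsilon}\omega_{c}$; splitting $\widetilde{z}=(\widetilde{\zeta},\widetilde{\bm\xi})$ and reading off components reproduces the stated equations. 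I would note that this $\Omega$ differs from \eqref{good_omega}, but by Section~\ref{sec:abstract_loops} it is an equally admissible choice---exploiting exactly that freedom is the point.

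For the normalization $\widetilde{\zeta}(\theta)=\theta+\widetilde{\nu}(\theta)$ with single-valued $\widetilde{\nu}$, I would note that $\zeta$ is an $S^{1}$-valued coordinate, so the $\zeta$-component of a loop is a map $S^{1}\to S^{1}$ with a well-defined winding number, and restrict throughout to the connected component of loop space of winding-number-one loops---the component containing the rigid gyro-rings of Section~\ref{slow_man_sub}, which to leading order are once-traversed circles. On that component $\widetilde{\zeta}(\theta)=\theta+\widetilde{\nu}(\theta)$ with $\widetilde{\nu}$ a genuine $2\pi$-periodic function, and the form persists under the continuous loop space flow and under the near-identity maps $\Phi^{(N)}_{\epsilon}$ because winding number is a homotopy invariant.

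The ``equivalently'' reformulation is one more application of functoriality, now with $\psi=\Phi^{(N)}_{\epsilon}$, a genuine diffeomorphism of single-particle phase space for each finite $N$. By \eqref{partial_trans_one}--\eqref{partial_trans_two} it conjugates $X$ into the generator whose $\zeta$-component is $\tfrac{1}{\epsilon}\Omega_{-1}(\overline{\bm\xi}_{N})+\overline{\Omega}^{(N)}_{\epsilon}(\overline{\bm\xi}_{N})+\epsilon^{N}\delta\Omega^{(N)}_{\epsilon}$ and whose $\bm\xi$-component is $\overline{U}^{(N)}_{\epsilon}(\overline{\bm\xi}_{N})+\epsilon^{N}\delta U^{(N)}_{\epsilon}$, while $\tfrac{1}{\epsilon}\omega_{c}$ transforms into its pullback $\tfrac{1}{\epsilon}\,\omega_{c}\circ(\widetilde{\Phi}^{(N)}_{\epsilon})^{-1}$, which is precisely the $\partial_{\theta}$-coefficient appearing in the stated transformed system; substituting and reading off components gives the transformed-loop equations verbatim, $\epsilon^{N}$ errors included. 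The main obstacle is therefore not any computation but pinning down the functoriality claim at the level of $\ell P\times S^{1}$---checking that the spinning map and phase shifts genuinely commute with the induced loop diffeomorphism, and that the construction behaves well over the (possibly only local) domain of Kruskal's coordinates---together with the winding-number bookkeeping that keeps $\widetilde{\nu}$ single-valued; once these structural points are in place the lemma reduces to substitution into formulas already present in the excerpt.
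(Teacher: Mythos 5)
Your proposal is correct and follows essentially the same route as the paper, which proves the lemma as ``a simple corollary of the fact that constructing loop space dynamics commutes with applying coordinate transformations,'' i.e.\ exactly your functoriality statement with $\underline{\Omega}=\Omega\circ\widetilde{\phi}^{-1}$, applied once for the passage to Kruskal's coordinates and once for $\Phi^{(N)}_\epsilon$. Your additional checks (that the spinning map and phase shifts commute with the induced loop diffeomorphism, and the winding-number bookkeeping behind $\widetilde{\zeta}(\theta)=\theta+\widetilde{\nu}(\theta)$) are details the paper leaves implicit rather than a different argument.
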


This result is a simple corollary of the fact that constructing loop space dynamics commutes with applying coordinate transformations. Indeed, suppose that $\dot{z} = Y(z)$ is the infinitesimal generator of a smooth dynamical system on a manifold $M\ni z$, and let $\phi:M\rightarrow M:z\mapsto \underline{z}$ be a diffeomorphism. We may apply the diffeomorphism to $M$, thereby obtaining the transformed infinitesimal generator $\underline{Y} = \phi_*Y$, and then construct the corresponding loop space dynamics:
\begin{align}
\partial_t \widetilde{\underline{z}}(\theta,t) + \underline{\Omega}(\widetilde{\underline{z}}(t))\,\partial_\theta\widetilde{\underline{z}}(\theta,t) = \underline{Y}(\widetilde{\underline{z}}(\theta,t)).\label{trans_loop}
\end{align}
Equivalently, we may first construct loop space dynamics associated with $Y$:
\begin{align}
\partial_t\widetilde{z}(\theta,t) + \Omega(\widetilde{z}(t))\,\partial_\theta\widetilde{z}(\theta,t) = Y(\widetilde{z}(\theta,t)),
\end{align}
and then inquire as to the dynamics of the transformed loop $\widetilde{\underline{z}}(\theta) = \phi(\widetilde{z}(\theta))$. Applying the chain rule gives \eqref{trans_loop} with $\underline{\Omega}(\widetilde{\underline{z}}) = \Omega(\widetilde{\phi}^{-1}(\widetilde{\underline{z}}))$, where $\widetilde{\phi}^{-1}(\widetilde{\underline{z}})(\theta) = \phi^{-1}(\widetilde{\underline{z}}(\theta))$.

It is therefore a small step to replace $\Phi^{(N)}_\epsilon$ in Lemma \ref{equivalence_lemma} with the formal all-orders transformation $\Phi_\epsilon$, and thereby obtain the following expression for the loop space infinitesimal generator that is valid to all orders in perturbation theory.

\begin{lemma}\label{equivalence_lemma_2}
The infinitesimal generator for loop space dynamics associated with the Lorentz force is equivalent to the formal series
\begin{align}
\partial_t\widetilde{\underline{\zeta}}(\theta,t) + \frac{1}{\epsilon}\omega_c\left((\widetilde{\Phi}_\epsilon)^{-1}(\widetilde{\underline{\zeta}}(t),\widetilde{\underline{\bm{\xi}}}(t))\right)\,\partial_\theta \widetilde{\underline{\zeta}}(\theta,t) =& \frac{1}{\epsilon} \Omega_{-1}(\widetilde{\underline{\bm{\xi}}}(\theta,t)) + \overline{\Omega}_\epsilon(\widetilde{\underline{\bm{\xi}}}(\theta,t))\label{aligned_zeta} \\
\partial_t\widetilde{\underline{\bm{\xi}}}(\theta,t) +  \frac{1}{\epsilon}\omega_c\left((\widetilde{\Phi}_\epsilon)^{-1}(\widetilde{\underline{\zeta}}(t),\widetilde{\underline{\bm{\xi}}}(t))\right)\,\partial_\theta \widetilde{\underline{\bm{\xi}}}(\theta,t) = &\overline{U}_\epsilon(\widetilde{\underline{\bm{\xi}}}(\theta,t))\label{aligned_xi} \\
\dot{S}(t) =& \frac{1}{\epsilon}\omega_c\left((\widetilde{\Phi}^{(N)}_\epsilon)^{-1}(\widetilde{\underline{\zeta}}(t),\widetilde{\underline{\bm{\xi}}}(t))\right),
\end{align}
where $\widetilde{\underline{\zeta}}(\theta) = \theta + \widetilde{\underline{\nu}}(\theta)$.
\end{lemma}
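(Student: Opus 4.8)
The plan is to treat both sides of the claimed identity as formal power series in $\epsilon$ and to prove their equality coefficient by coefficient, bootstrapping from the finite-order statement already recorded in Lemma \ref{equivalence_lemma} together with the principle, recorded in the discussion preceding this lemma, that constructing loop space dynamics commutes with applying a diffeomorphism. The crucial observation is that although the all-orders transformation $\Phi_\epsilon$ is only a \emph{formal} change of coordinates, each truncation $\Phi^{(N)}_\epsilon$ is a genuine near-identity diffeomorphism to which the commutation principle applies verbatim, and $\Phi_\epsilon$ agrees with $\Phi^{(N)}_\epsilon$ modulo $O(\epsilon^N)$ by construction.

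First I would fix an arbitrary target order $k$ and choose $N \geq k+1$. By the second half of Lemma \ref{equivalence_lemma}, the transported loop $(\widetilde{\zeta}_N,\widetilde{\bm{\xi}}_N) = \widetilde{\Phi}^{(N)}_\epsilon(\widetilde{\zeta},\widetilde{\bm{\xi}})$ obeys the order-$N$ system displayed there, and I would compare that system with \eqref{aligned_zeta}--\eqref{aligned_xi} term by term. The drift data agree, $\overline{\Omega}_\epsilon^{(N)} = \overline{\Omega}_\epsilon + O(\epsilon^N)$ and $\overline{U}_\epsilon^{(N)} = \overline{U}_\epsilon + O(\epsilon^N)$, so they match through order $\epsilon^k$; the defect terms $\epsilon^N\delta\Omega^{(N)}_\epsilon$ and $\epsilon^N\delta U^{(N)}_\epsilon$ are $O(\epsilon^{k+1})$ and so leave the order-$k$ truncation untouched; and because $\Phi^{(N)}_\epsilon$ and $\Phi_\epsilon$ differ by $O(\epsilon^N)$, the transported loops $(\widetilde{\zeta}_N,\widetilde{\bm{\xi}}_N)$ and $(\widetilde{\underline{\zeta}},\widetilde{\underline{\bm{\xi}}})$ differ by $O(\epsilon^N)$. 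The one place requiring a moment's care is the frequency functional: I would use that $\omega_c$ in \eqref{omegac_def} is built from the fixed function $\Omega_{-1}$ and the bounded linear averaging operation $\widetilde{\bm{\xi}} \mapsto \fint \widetilde{\bm{\xi}}\,d\theta$, so that composing it with the formal series $(\widetilde{\Phi}_\epsilon)^{-1}$ respects truncation in $\epsilon$ and $\tfrac{1}{\epsilon}\omega_c\big((\widetilde{\Phi}^{(N)}_\epsilon)^{-1}(\cdots)\big)$ agrees with $\tfrac{1}{\epsilon}\omega_c\big((\widetilde{\Phi}_\epsilon)^{-1}(\cdots)\big)$ through order $\epsilon^k$. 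Combining these observations, the order-$k$ Taylor polynomial in $\epsilon$ of the loop space generator equals that of the right-hand side of Lemma \ref{equivalence_lemma_2}; since $k$ is arbitrary, the two formal series coincide.

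It remains to record the normal form $\widetilde{\underline{\zeta}}(\theta) = \theta + \widetilde{\underline{\nu}}(\theta)$ with single-valued $\widetilde{\underline{\nu}}$. Lemma \ref{equivalence_lemma} already gives $\widetilde{\zeta}(\theta) = \theta + \widetilde{\nu}(\theta)$ with $\widetilde{\nu}$ single-valued, and the transported angle is the $\zeta$-component of $\Phi_\epsilon(\widetilde{\zeta}(\theta),\widetilde{\bm{\xi}}(\theta))$. Since $\Phi_\epsilon = \mathrm{id} + O(\epsilon)$ with the $O(\epsilon)$ correction a genuine $2\pi$-periodic function of the angular variable, $\widetilde{\underline{\zeta}}(\theta) = \theta + \widetilde{\nu}(\theta) + (\text{single-valued } O(\epsilon))$; equivalently, every $\Phi^{(N)}_\epsilon$ preserves the winding number of a loop in the $\zeta$-direction, and this passes to the formal limit, so $\widetilde{\underline{\nu}}$ is single-valued as a formal series.

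The main obstacle I anticipate is organizational rather than conceptual: because no honest diffeomorphism $\Phi_\epsilon$ exists, one cannot apply the commutation principle at $N = \infty$ and must instead police the $\epsilon$-bookkeeping carefully — in particular checking that composition of the formal series $\Phi_\epsilon$ with the nonlocal loop functional $\omega_c$ commutes with truncation, and that Kruskal's coordinate change is compatible with the fast/slow splitting of Theorem \ref{fast_slow_proof}, so that ``order $k$ in $\epsilon$'' is a well-defined notion on both sides of the identity. Neither point is deep, but both must be in place before the coefficient-by-coefficient comparison above is legitimate.
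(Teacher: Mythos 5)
Your proposal is correct and follows essentially the same route as the paper, which simply declares the passage from Lemma \ref{equivalence_lemma} to the all-orders statement ``a small step'' of replacing $\Phi^{(N)}_\epsilon$ by $\Phi_\epsilon$; your order-by-order bookkeeping is the honest version of that step. The only quibble is an off-by-one: because of the $1/\epsilon$ prefactor on the frequency terms, a discrepancy of $O(\epsilon^N)$ in the transported loops becomes $O(\epsilon^{N-1})$ in the equations, so you should take $N\geq k+2$ rather than $N\geq k+1$ --- harmless, since $k$ is arbitrary.
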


This perturbative characterization of loop space dynamics is useful because (a) the coefficients of the all-orders guiding center equations appear explicitly, and (b) the fast-slow split for loop space dynamics has become especially simple.

\begin{proposition}\label{formal_fast_slow}
In terms of the formulation given in Lemma \ref{equivalence_lemma_2}, the fast and slow variables for loop space dynamics associated with the Lorentz force are given by
\begin{align}
x =& (\overline{\underline{\nu}},\overline{\underline{\bm{\xi}}},\mathcal{S})\\
y =& (\widehat{\underline{\nu}},\widehat{\bm{\varrho}}),
\end{align}
where $\mathcal{S} = \epsilon\, S$ and
\begin{align}
\overline{\underline{\nu}} =& \fint \widetilde{\underline{\nu}}(\theta^\prime)\,d\theta^\prime\\
\overline{\underline{\bm{\xi}}} = & \fint \widetilde{\bm{\xi}}(\theta^\prime)\,d\theta^\prime\\
\widehat{\underline{\nu}}(\theta) = & \widetilde{\underline{\nu}}(\theta) -  \fint \widetilde{\underline{\nu}}(\theta^\prime)\,d\theta^\prime\\
\epsilon\,\widehat{\bm{\varrho}}(\theta) = & \widetilde{\underline{\bm{\xi}}}(\theta) - \fint \widetilde{\bm{\xi}}(\theta^\prime)\,d\theta^\prime.
\end{align}
\end{proposition}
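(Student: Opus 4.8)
The plan is to substitute the decomposition $\widetilde{\underline{\zeta}}(\theta) = \theta + \overline{\underline{\nu}} + \widehat{\underline{\nu}}(\theta)$, $\widetilde{\underline{\bm{\xi}}}(\theta) = \overline{\underline{\bm{\xi}}} + \epsilon\,\widehat{\bm{\varrho}}(\theta)$, and $S = \mathcal{S}/\epsilon$ directly into the loop space equations of motion \eqref{aligned_zeta}--\eqref{aligned_xi} from Lemma \ref{equivalence_lemma_2} (for $\epsilon\neq 0$ this is an invertible change of loop-space coordinates, and the resulting equations will be seen to extend formally to $\epsilon = 0$), then read off the pieces $f_\epsilon,g_\epsilon$ of the infinitesimal generator and verify the two defining properties of a fast-slow system: (i) that $\overline{\underline{\nu}}$, $\overline{\underline{\bm{\xi}}}$, and $\mathcal{S}$ evolve on an $O(1)$ timescale, while the zero-mean loops $\widehat{\underline{\nu}}$, $\widehat{\bm{\varrho}}$ evolve on the $O(1/\epsilon)$ timescale; and (ii) that $D_yf_0(x,y)$ is invertible wherever $f_0(x,y)=0$. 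Here the fast space $Y$ is the Banach space of pairs of $\theta$-loops in $\mathbb{R}$ and in $\mathbb{R}^5$ with vanishing $\theta$-average, and the slow space $X$ is the $7$-dimensional space carrying $(\overline{\underline{\nu}},\overline{\underline{\bm{\xi}}},\mathcal{S})$. As elsewhere in the paper, ``smooth dependence on $\epsilon$'' is meant in the formal power-series sense: $\overline{\Omega}_\epsilon$ and $\overline{U}_\epsilon$ are formal series in $\epsilon$ with coefficients smooth in $\overline{\underline{\bm{\xi}}}$, hence smooth in $(\overline{\underline{\bm{\xi}}},\widehat{\bm{\varrho}})$ after the substitution $\widetilde{\underline{\bm{\xi}}} = \overline{\underline{\bm{\xi}}}+\epsilon\widehat{\bm{\varrho}}$.

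The whole computation turns on bookkeeping the singular advection term $\tfrac{1}{\epsilon}\omega_c$. Because $\Phi_\epsilon$ is near-identity with $\Phi_0 = \mathrm{id}$, and because $\omega_c$ depends on the loop only through the $\theta$-average of the untransformed $\widetilde{\bm{\xi}}$, I expect $\omega_c\big((\widetilde{\Phi}_\epsilon)^{-1}(\widetilde{\underline{\zeta}},\widetilde{\underline{\bm{\xi}}})\big) = \Omega_{-1}(\overline{\underline{\bm{\xi}}}) + \epsilon\,(\cdots)$, a formal series in $\epsilon$ smooth in $(x,y)$. In \eqref{aligned_xi}, since $\partial_\theta\widetilde{\underline{\bm{\xi}}} = \epsilon\,\partial_\theta\widehat{\bm{\varrho}}$, the advection term collapses to $\omega_c\,\partial_\theta\widehat{\bm{\varrho}}$, which is $O(1)$; splitting \eqref{aligned_xi} into its $\theta$-average and $\theta$-fluctuation will then give the slow equation $\dot{\overline{\underline{\bm{\xi}}}} = \fint\overline{U}_\epsilon(\overline{\underline{\bm{\xi}}}+\epsilon\widehat{\bm{\varrho}})\,d\theta$ together with the fast equation $\epsilon\,\partial_t\widehat{\bm{\varrho}} = -\omega_c\,\partial_\theta\widehat{\bm{\varrho}} + \widetilde{\overline{U}_\epsilon(\overline{\underline{\bm{\xi}}}+\epsilon\widehat{\bm{\varrho}})}$, the tilde denoting the zero-mean part. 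In \eqref{aligned_zeta}, using $\partial_\theta\widetilde{\underline{\zeta}} = 1+\partial_\theta\widehat{\underline{\nu}}$ the term $\tfrac{1}{\epsilon}\omega_c$ cancels $\tfrac{1}{\epsilon}\Omega_{-1}(\widetilde{\underline{\bm{\xi}}})$ up to $\tfrac{1}{\epsilon}[\Omega_{-1}(\widetilde{\underline{\bm{\xi}}})-\omega_c] = D\Omega_{-1}(\overline{\underline{\bm{\xi}}})[\widehat{\bm{\varrho}}] + O(\epsilon) = O(1)$, leaving $\partial_t\widetilde{\underline{\nu}} = \tfrac{1}{\epsilon}[\Omega_{-1}(\widetilde{\underline{\bm{\xi}}})-\omega_c] - \tfrac{1}{\epsilon}\omega_c\,\partial_\theta\widehat{\underline{\nu}} + \overline{\Omega}_\epsilon(\widetilde{\underline{\bm{\xi}}})$; its $\theta$-average is the slow equation for $\overline{\underline{\nu}}$ (the mean of $D\Omega_{-1}(\overline{\underline{\bm{\xi}}})[\widehat{\bm{\varrho}}]$ vanishes because $\widehat{\bm{\varrho}}$ is zero-mean), and $\epsilon$ times its $\theta$-fluctuation gives $\epsilon\,\partial_t\widehat{\underline{\nu}} = -\omega_c\,\partial_\theta\widehat{\underline{\nu}} + \epsilon\,(\cdots)$. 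Finally $\dot{\mathcal{S}} = \epsilon\dot S = \omega_c = \Omega_{-1}(\overline{\underline{\bm{\xi}}}) + O(\epsilon)$ is manifestly slow.

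Assembling these, the generator has the fast-slow form $\dot x = g_\epsilon(x,y)$, $\epsilon\dot y = f_\epsilon(x,y)$ with all pieces formally smooth in $\epsilon$, and leading fast coefficient $f_0(x,y) = \big(-\Omega_{-1}(\overline{\underline{\bm{\xi}}})\,\partial_\theta\widehat{\underline{\nu}},\,-\Omega_{-1}(\overline{\underline{\bm{\xi}}})\,\partial_\theta\widehat{\bm{\varrho}}\big)$; note the $\theta$-independent term $\overline{U}_0(\overline{\underline{\bm{\xi}}})$ drops out of $f_0$ because its zero-mean part vanishes, and the $O(1)$ remainder in the $\widehat{\underline{\nu}}$-equation contributes nothing after multiplication by $\epsilon$. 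Since $\Omega_{-1}$ is nowhere zero, $f_0(x,y)=0$ forces $\partial_\theta\widehat{\underline{\nu}} = \partial_\theta\widehat{\bm{\varrho}} = 0$, hence (being zero-mean loops) $\widehat{\underline{\nu}} = 0$ and $\widehat{\bm{\varrho}} = 0$. On this zero set (indeed, everywhere) $f_0$ is already linear in $y$, so $D_yf_0(x,y)[\delta y] = \big(-\Omega_{-1}(\overline{\underline{\bm{\xi}}})\,\partial_\theta\,\delta\widehat{\underline{\nu}},\,-\Omega_{-1}(\overline{\underline{\bm{\xi}}})\,\partial_\theta\,\delta\widehat{\bm{\varrho}}\big)$; this is invertible because $\partial_\theta$ is an isomorphism of the space of zero-mean $\theta$-loops (its inverse being the unique zero-mean antiderivative) and $\Omega_{-1}(\overline{\underline{\bm{\xi}}})\neq 0$. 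This will complete the verification.

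I expect the only genuine difficulty to be organizational: cleanly separating the $O(1/\epsilon)$, $O(1)$, and $O(\epsilon)$ contributions hidden inside $\tfrac{1}{\epsilon}\omega_c$ once $\widetilde{\underline{\bm{\xi}}} = \overline{\underline{\bm{\xi}}}+\epsilon\widehat{\bm{\varrho}}$ is inserted, and in particular justifying $\omega_c\big((\widetilde{\Phi}_\epsilon)^{-1}(\widetilde{\underline{\zeta}},\widetilde{\underline{\bm{\xi}}})\big) - \Omega_{-1}(\overline{\underline{\bm{\xi}}}) = O(\epsilon)$ as a formal series with smooth coefficients (this uses only $\Phi_0 = \mathrm{id}$ and that $\omega_c$ sees the loop through a single $\theta$-average). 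Once that expansion is in hand, averaging, extracting fluctuations, and multiplying through by $\epsilon$ are routine, and the invertibility of $D_yf_0$ is essentially immediate because $f_0$ reduces to a $\theta$-derivative on zero-mean loops rescaled by a nowhere-vanishing function of the slow variable $\overline{\underline{\bm{\xi}}}$.
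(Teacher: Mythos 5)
Your proposal is correct and follows essentially the same route as the paper: substitute the mean/fluctuation decomposition into Eqs.~\eqref{aligned_zeta}--\eqref{aligned_xi}, show $\dot{\overline{\underline{\nu}}},\dot{\overline{\underline{\bm{\xi}}}},\dot{\mathcal{S}}=O(1)$ using $\Phi_0=\mathrm{id}$ and the definition \eqref{omegac_def} of $\omega_c$, and identify $f_0(x,y)=\bigl(-\Omega_{-1}(\overline{\underline{\bm{\xi}}})\,\partial_\theta\widehat{\underline{\nu}},\,-\Omega_{-1}(\overline{\underline{\bm{\xi}}})\,\partial_\theta\widehat{\bm{\varrho}}\bigr)$. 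Your treatment of the last step is in fact slightly more careful than the paper's, which describes $D_yf_0$ as a nonzero multiple of the identity, whereas you correctly recognize it as a nonzero multiple of $\partial_\theta$ and justify invertibility by noting that $\partial_\theta$ is an isomorphism of the space of zero-mean loops.
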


\begin{proof}
According to Eqs.\,\eqref{aligned_zeta}-\eqref{aligned_xi}, the time derivatives of $\overline{\underline{\nu}}$, $\overline{\underline{\bm{\xi}}}$, and $\mathcal{S}$ are given by
\begin{align}
\dot{\overline{\underline{\nu}}} =& \frac{1}{\epsilon} \left(\fint\Omega_{-1}(\widetilde{\bm{\xi}}(\theta^\prime))\,d\theta^\prime - \omega_c\left((\widetilde{\Phi}_\epsilon)^{-1}(\widetilde{\underline{\zeta}}(t),\widetilde{\underline{\bm{\xi}}}(t))\right)\right) + O(1)\\
\dot{\overline{\underline{\bm{\xi}}}} =& \fint \overline{U}_\epsilon(\widetilde{\bm{\xi}}(\theta^\prime))\,d\theta^\prime \\
\dot{\mathcal{S}} = & \omega_c\left((\widetilde{\Phi}^{(N)}_\epsilon)^{-1}(\widetilde{\underline{\zeta}}(t),\widetilde{\underline{\bm{\xi}}}(t))\right)
\end{align}
The quantity $\dot{\overline{\underline{\bm{\xi}}}} = O(1)$ because $\overline{U}_\epsilon = O(1)$. The quantity $\dot{\overline{\underline{\nu}}} = O(1)$ because $\Phi_0 = \text{id}$ and $\omega_c$ is given by Eq.\,\eqref{omegac_def}. The quantity $\dot{\mathcal{S}}$ is obviously $O(1)$. Therefore $x = (\overline{\underline{\nu}},\overline{\underline{\bm{\xi}}},\mathcal{S})$ is a reasonable candidate for the slow variable.

The leading-order contributions to the time derivatives of $\widehat{\underline{\nu}}$ and $\widehat{\bm{\varrho}}$ are given by
\begin{align}
\partial_t\widehat{\underline{\nu}}(\theta,t) = &- \frac{1}{\epsilon} \,\Omega_{-1}(\overline{\underline{\bm{\xi}}}(t))\,\partial_\theta\widehat{\underline{\nu}}(\theta,t) + O(1)\\
\partial_t\widehat{\bm{\varrho}}(\theta,t) = & - \frac{1}{\epsilon}\,\Omega_{-1}(\overline{\underline{\bm{\xi}}}(t))\partial_\theta\widehat{\bm{\varrho}}(\theta,t) + O(1).
\end{align}
Therefore, if $y = (\widehat{\underline{\nu}},\widehat{\bm{\varrho}})$, $\epsilon \dot{y}= f_0(x,y) +O(\epsilon)$, with $f_0(x,y)$ given by
\begin{align}
f_0(x,y) = & (\dot{\widehat{\underline{\nu}}}_0,\dot{\widehat{\bm{\varrho}}}_0)\\
\dot{\widehat{\underline{\nu}}}_0 = & -\Omega_{-1}(\overline{\underline{\bm{\xi}}})\,\partial_\theta\widehat{\underline{\nu}}\\
\dot{\widehat{\bm{\varrho}}}_0 = & - \Omega_{-1}(\overline{\underline{\bm{\xi}}})\,\partial_\theta\widehat{\bm{\varrho}}.
\end{align}
It follows that the derivative $D_yf_0(x,y)$ is a non-vanishing multiple of the identity, and that $(x,y)$ comprise a fast-slow split for loop space dynamics. 
\end{proof}

In fact, the fast-slow split has become so simple that the coefficients defining the formal slow manifold, as well as the infinitesimal generator for the slow dynamics, may be computed explicitly to all orders in $\epsilon$.

\begin{theorem}\label{formal_slow_man}
The formal slow manifold associated with the fast-slow split given in Proposition \ref{formal_fast_slow} is given by $y_\epsilon^* = 0$. The infinitesimal generator on the formal slow manifold is given by
\begin{align}
\dot{\overline{\underline{\nu}}} =& \overline{\Omega}_\epsilon(\overline{\underline{\bm{\xi}}}) - \delta\omega_{\epsilon}(\overline{\underline{\nu}},\overline{\underline{\xi}})\\
\dot{\overline{\underline{\bm{\xi}}}} = &\overline{U}_\epsilon(\overline{\underline{\bm{\xi}}})\\
\dot{\mathcal{S}} = & \omega_c\left((\widetilde{\Phi}_\epsilon)^{-1}(\widetilde{\underline{\zeta}}^*(t),\widetilde{\underline{\bm{\xi}}}^*(t))\right)
\end{align}
where
\begin{align}
\delta\omega_{\epsilon}(\overline{\underline{\nu}},\overline{\underline{\xi}}) = \frac{1}{\epsilon}\Omega_{-1}(\overline{\underline{\bm{\xi}}}) - \frac{1}{\epsilon} \omega_c\left((\widetilde{\Phi}_\epsilon)^{-1}(\widetilde{\underline{\zeta}}^*(t),\widetilde{\underline{\bm{\xi}}}^*(t))\right),
\end{align}
and
\begin{align}
\widetilde{\underline{\zeta}}^*(\theta) =& \theta + \overline{\underline{\nu}}\\
\widetilde{\underline{\bm{\xi}}}^*(\theta) =& \overline{\underline{\bm{\xi}}}.
\end{align}
Note that $\delta\omega_{\epsilon} = O(1)$ as $\epsilon\rightarrow 0$ because $\Phi_0 = \text{id}$.
\end{theorem}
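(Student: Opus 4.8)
The plan is to guess the formal slow manifold and then verify it, appealing afterwards to the uniqueness clause of Proposition \ref{SM_existence}. I claim that the constant series $y_\epsilon^* = 0$ solves the invariance equation \eqref{invariance_equation}. Since $y_\epsilon^*\equiv 0$ does not depend on $x$, one has $Dy_\epsilon^*(x) = 0$, so the left-hand side of \eqref{invariance_equation} vanishes identically and it suffices to check that $f_\epsilon(x,0) = 0$, i.e. that the fast vector field vanishes on the locus $y = 0$; this must hold coefficient-by-coefficient as an identity of formal power series in $\epsilon$.

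Next I would unpack what $y=0$ means. By Proposition \ref{formal_fast_slow}, $y = (\widehat{\underline{\nu}},\widehat{\bm{\varrho}})$, so $y = 0$ forces $\widetilde{\underline{\nu}}(\theta)\equiv\overline{\underline{\nu}}$ and $\widetilde{\underline{\bm{\xi}}}(\theta)\equiv\overline{\underline{\bm{\xi}}}$; in other words the loop in the Kruskal-type description of Lemma \ref{equivalence_lemma_2} is \emph{rigid}, with $\widetilde{\underline{\zeta}}(\theta) = \theta + \overline{\underline{\nu}}$ and $\widetilde{\underline{\bm{\xi}}}(\theta) = \overline{\underline{\bm{\xi}}}$, so that $\partial_\theta\widetilde{\underline{\zeta}}\equiv 1$ and $\partial_\theta\widetilde{\underline{\bm{\xi}}}\equiv 0$. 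The key observation is that, once these are substituted into \eqref{aligned_zeta}--\eqref{aligned_xi}, every term is independent of $\theta$: the functions $\Omega_{-1}$, $\overline{\Omega}_\epsilon$, $\overline{U}_\epsilon$ are evaluated at the constant argument $\overline{\underline{\bm{\xi}}}$, while $\omega_c$ of \emph{any} loop is a scalar. Hence $\partial_t\widetilde{\underline{\zeta}}(\theta,t) = \partial_t\widetilde{\underline{\nu}}(\theta,t)$ and $\partial_t\widetilde{\underline{\bm{\xi}}}(\theta,t)$ are $\theta$-independent on $y=0$, so the $\theta$-fluctuating parts of $\widetilde{\underline{\nu}}$ and $\epsilon^{-1}\widetilde{\underline{\bm{\xi}}}$ have vanishing time derivative, i.e. $\dot{\widehat{\underline{\nu}}} = 0$ and $\dot{\widehat{\bm{\varrho}}} = 0$. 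This is precisely $f_\epsilon(x,0) = 0$, so $y_\epsilon^* = 0$ is a formal slow manifold, and by uniqueness it is \emph{the} formal slow manifold.

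For the reduced dynamics I would compute $\dot{x} = g_\epsilon(x,0)$ by $\theta$-averaging the equations of Lemma \ref{equivalence_lemma_2} along the rigid loop. Averaging \eqref{aligned_xi} kills the $\partial_\theta\widetilde{\underline{\bm{\xi}}}$ transport term and gives $\dot{\overline{\underline{\bm{\xi}}}} = \overline{U}_\epsilon(\overline{\underline{\bm{\xi}}})$. Averaging \eqref{aligned_zeta}, using $\fint\partial_\theta\widetilde{\underline{\zeta}}\,d\theta = 1$, gives $\overline{\Omega}_\epsilon(\overline{\underline{\bm{\xi}}})$ plus the residual transport contribution $\tfrac{1}{\epsilon}\Omega_{-1}(\overline{\underline{\bm{\xi}}}) - \tfrac{1}{\epsilon}\omega_c((\widetilde{\Phi}_\epsilon)^{-1}(\widetilde{\underline{\zeta}}^*,\widetilde{\underline{\bm{\xi}}}^*))$, which is exactly $\delta\omega_\epsilon$; and the $\dot{S}$ equation evaluated on the rigid loop gives $\dot{\mathcal{S}} = \omega_c((\widetilde{\Phi}_\epsilon)^{-1}(\widetilde{\underline{\zeta}}^*,\widetilde{\underline{\bm{\xi}}}^*))$, with $\widetilde{\underline{\zeta}}^* = \theta + \overline{\underline{\nu}}$ and $\widetilde{\underline{\bm{\xi}}}^* = \overline{\underline{\bm{\xi}}}$ as in the statement. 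Collecting terms yields the three displayed evolution equations.

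The one delicate point — and the sole place where the hypotheses enter essentially — is confirming that $\delta\omega_\epsilon$ is genuinely $O(1)$ rather than singular, so that the reduced system really is slow; the worry is that $\delta\omega_\epsilon$ is $\epsilon^{-1}$ times a difference of two frequencies. This is resolved by the near-identity property $\Phi_0 = \text{id}$: pulling the rigid loop $(\theta+\overline{\underline{\nu}},\overline{\underline{\bm{\xi}}})$ back by $(\Phi_\epsilon)^{-1}$ perturbs its $\bm{\xi}$-component, hence its $\theta$-average, only at $O(\epsilon)$, so $\omega_c$ of the pulled-back loop equals $\Omega_{-1}(\overline{\underline{\bm{\xi}}}) + O(\epsilon)$ and the bracket in $\delta\omega_\epsilon$ is $O(\epsilon)$. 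Throughout, everything is to be read as an identity of formal power series in $\epsilon$; convergence is neither claimed nor needed, consistent with the definition of a formal slow manifold.
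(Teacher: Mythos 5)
Your proposal is correct and takes essentially the same route as the paper: guess $y_\epsilon^*=0$, observe that the left side of the invariance equation vanishes because $Dy_\epsilon^*=0$ while the right side vanishes because the sources in Lemma \ref{equivalence_lemma_2} are $\theta$-independent on rigid loops, and then invoke the uniqueness clause of Proposition \ref{SM_existence}; the explicit $\theta$-averaging that produces the reduced generator and the $O(1)$ estimate on $\delta\omega_\epsilon$ via $\Phi_0=\mathrm{id}$ are exactly the steps the paper leaves implicit. One small point to reconcile: your (correct) averaging of Eq.\,\eqref{aligned_zeta} yields $\dot{\overline{\underline{\nu}}}=\overline{\Omega}_\epsilon(\overline{\underline{\bm{\xi}}})+\delta\omega_\epsilon$ with the stated definition of $\delta\omega_\epsilon$, whereas the displayed evolution equation carries a minus sign, so one of the two signs needs to be flipped for consistency --- this looks like a typo in the statement rather than a flaw in your argument.
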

\begin{proof}
Because the formal slow manifold is unique, it suffices to check that $y_\epsilon^* = 0$ is a solution of the invariance equation. To that end, note that wherever $y = 0$, $\widetilde{\underline{\bm{\xi}}}(\theta) = \overline{\underline{\bm{\xi}}}$. The right-hand-side of the invariance equation $\epsilon Dy_\epsilon^*(x)[g_\epsilon(x,y_\epsilon^*(x))] = f_\epsilon(x,y_\epsilon^*(x))$ therefore vanishes when $y_\epsilon^* = 0$ because both $\Omega_{-1}(\overline{\underline{\bm{\xi}}}) + \epsilon \overline{\Omega}_\epsilon(\overline{\underline{\bm{\xi}}})$ and $\overline{U}_\epsilon(\overline{\underline{\bm{\xi}}})$ are independent of $\theta$. Being linear in $y_\epsilon^*$, the left-hand-side of the invariance equation also vanishes. The asymptotic series $y_\epsilon^* = 0$ is therefore the unique formal slow manifold.

\end{proof}

Theorem \ref{formal_slow_man} establishes the equivalence between dynamics on the formal slow manifold in loop space and all-orders guiding center dynamics, Eqs.\,\eqref{all_orders_gc_one}-\eqref{all_orders_gc_two}, upon making the simple identifications $\underline{\bm{\xi}} = \overline{\underline{\bm{\xi}}}$, $\underline{\zeta} = \overline{\underline{\nu}} - \mathcal{S}/\epsilon$, and then noting that the evolution of $\mathcal{S}$ decouples from the evolution of $(\overline{\underline{\nu}},\overline{\underline{\bm{\xi}}})$ on the formal slow manifold. An immediate corollary of this observation is that $\overline{\underline{\nu}}$ represents the so-called \emph{adiabatic phase}\cite{Littlejohn_1984,Littlejohn_1988,Burby_2012} associated with gyromotion.

\section{Hamiltonian structure on the formal slow manifold\label{ham_struc}}
A superconducting ring conserves the magnetic flux threading the ring's center. Charged particles in strong magnetic fields approximately exhibit the same property, although in that context the phenomenon is usually referred to as adiabatic invariance instead of superconductivity. There is good reason for this change in nomenclature; regardless of the strength of the magnetic field, a charged particle is emphatically \emph{not} a superconducting ring. Nevertheless, the proximity of the two concepts, flux conservation on the one hand and adiabatic invariance on the other, begs the following question. Since charged particles are not flux-conserving superconductors, what is the physical explanation for the behavioral similarity between the two sorts of objects?

The answer to the question is symmetry. Charged particle dynamics exhibit an approximate symmetry that, according to Noether's theorem, implies the presence of a conserved quantity that happens to be numerically equal to the magnetic flux at leading-order in perturbation theory. This explanation is of course very old and well-known. (See, for instance, Section E.5 of Ref.\,\onlinecite{Kruskal_1962}) However, the usual way of exhibiting this symmetry is rather technical and laborious, and therefore not as illuminating as one might hope from the physical point of view. In this final technical Section, I would like to elucidate the symmetry underlying a particle's approximate superconductivity, sometimes referred to suggestively as ``gyrosymmetry,"\cite{Qin_2007} in a manner that makes the symmetry itself appear almost obvious. Naturally, I aim to do this using the loop space picture of guiding center dynamics that has been the subject of this article. 

The starting point for this demonstrating is the action functional \eqref{ll_action} for loop space dynamics associated with the Lorentz force. I aim to show that obvious symmetries of this functional ultimately give rise to the symmetry of particle dynamics associated with adiabatic invariance. To that end, there are two obvious symmetries worth discussing. (1) Because the Lebesgue measure on the circle $d\theta$ is translation invariant, the value of the action does not change when the loop $(\widetilde{\bm{x}},\widetilde{\bm{v}})$ is subject to the phase shift 
\begin{align}
\widetilde{\bm{x}}&\mapsto \widetilde{\bm{x}}^{\psi_1}\label{G1x}\\
\widetilde{\bm{v}}&\mapsto \widetilde{\bm{v}}^{\psi_1},\label{G1y}
\end{align}
where $\psi_1\in S^1$ is any angle. (2) Because the phase function $S$ only appears in the action \emph{via} its time derivative, the value of the action is also unchanged when $S$ is translated according to
\begin{align}
S\mapsto S+\psi_2,\label{G2S}
\end{align}
where $\psi_2\in S^1$ is another arbitrary angle. This pair of obvious symmetries may be conveniently encoded as a single $T^2 \equiv S^1\times S^1$-action on the phase space for loop dynamics $\ell P\times S^1$,
\begin{align}
\Psi_{\psi_1,\psi_2}(\widetilde{\bm{x}},\widetilde{\bm{v}},S) = (\widetilde{\bm{x}}^{\psi_1},\widetilde{\bm{v}}^{\psi_1},S+\psi_2).
\end{align}
I will now show that $\Psi$ restricted to the subgroup $\psi_2 = 0$ is precisely the symmetry responsible for a charged particle's adiabatic invariance.

First it is convenient to leave behind the action functional \eqref{ll_action} in favor of the $1$-form $\Xi$ on $\ell P\times S^1$ given by
\begin{align}
\Xi(\widetilde{\bm{x}},\widetilde{\bm{v}},S)[\dot{\widetilde{\bm{x}}},\dot{\widetilde{\bm{v}}},\dot{S}] =&\fint \left(\frac{1}{\epsilon}\bm{A}(\widetilde{\bm{x}}(\theta))+\widetilde{\bm{v}}(\theta)\right)\cdot \dot{\widetilde{\bm{x}}}(\theta) \,d\theta\nonumber\\
&+ \dot{S}\fint \left(\frac{1}{\epsilon}\bm{A}(\widetilde{\bm{x}}(\theta)+\widetilde{\bm{v}}(\theta)\right)\cdot\partial_\theta\widetilde{\bm{x}}(\theta)\,d\theta.
\end{align}
This $1$-form is clearly related to the action functional \eqref{ll_action}, for
\begin{align}
\widetilde{A}(\widetilde{\bm{x}},\widetilde{\bm{v}},S) = \int_{t_1}^{t_2} \Xi(\widetilde{\bm{x}},\widetilde{\bm{v}},S)[\dot{\widetilde{\bm{x}}},\dot{\widetilde{\bm{v}}},\dot{S}] \,dt - \int_{t_1}^{t_2} \mathcal{H}(\widetilde{\bm{x}},\widetilde{\bm{v}})\,dt,
\end{align}
where $\mathcal{H}$ is the loop energy defined in Eq.\,\eqref{loop_energy_def}. In fact, if $\mathcal{X} = (\dot{\widetilde{\bm{x}}},\dot{\widetilde{\bm{v}}},\dot{S})$ denotes the infinitesimal generator for loop space dynamics, c.f. Eqs.\,\eqref{lorentz_loop_v}-\eqref{lorentz_loop_S}, the Euler-Lagrange equations associated with $\widetilde{A}$ may be written
\begin{align}
\iota_{\mathcal{X}}\mathbf{d}\Xi  = - \mathbf{d}\mathcal{H},\label{presymp_ham}
\end{align}
where $\mathbf{d}$ denotes the exterior derivative on $\ell P\times S^1$. Equation \eqref{presymp_ham} is an example of a \emph{presymplectic Hamilton's equation}. Therefore $\Xi$, together with $\mathcal{H}$, geometrically encode the information contained in the functional $\widetilde{A}$.

The invariance of $\widetilde{A}$ under the $T^2$-action $\Psi$ is equivalent to the pair of pullback relations
\begin{align}
\Psi_{\psi_1,\psi_2}^*\Xi =& \Xi\label{xi_invariance}\\
\Psi_{\psi_1,\psi_2}^*\mathcal{H} = & \mathcal{H}.\label{H_invariance}
\end{align}
Therefore, if we define the infinitesimal generators 
\begin{align}
\bm{\partial}_1(\widetilde{\bm{x}},\widetilde{\bm{v}},S) =& \frac{d}{d\lambda}\bigg|_0 \Psi_{\lambda,0}(\widetilde{\bm{x}},\widetilde{\bm{v}},S)\\
\bm{\partial}_2(\widetilde{\bm{x}},\widetilde{\bm{v}},S) = &\frac{d}{d\lambda}\bigg|_0\Psi_{0,\lambda}(\widetilde{\bm{x}},\widetilde{\bm{v}},S),
\end{align}
Noether's theorem implies the functionals 
\begin{align}
\mathcal{I}_1 = &\iota_{\bm{\partial}_1}\Xi\\
\mathcal{I}_2 = & \iota_{\bm{\partial}_2}\Xi
\end{align}
are each constant along trajectories of loop space dynamics. Curiously, these functionals are \emph{each} equal to the (normalized) loop action \eqref{loop_action_def},
\begin{align}
\mathcal{I}_1 = \mathcal{I}_2 = \fint \left(\frac{1}{\epsilon}\bm{A}(\widetilde{\bm{x}}(\theta)+\widetilde{\bm{v}}(\theta)\right)\cdot\partial_\theta\widetilde{\bm{x}}(\theta)\,d\theta.
\end{align}


Now consider the formal slow manifold $\Gamma \subset \ell P\times S^1$. Let $\mathcal{F}_t$ denote the loop space dynamics flow. Being a formally invariant set, the flow on loop space maps $\Gamma$ into itself, i.e. $\mathcal{F}_t(\Gamma) = \Gamma$. Therefore the set $\Psi_{\psi_1,\psi_2}(\Gamma)\equiv \Gamma_{\psi_1,\psi_2}$ satisfies
\begin{align}
\mathcal{F}_t(\Gamma_{\psi_1,\psi_2}) = \mathcal{F}_t(\Psi_{\psi_1,\psi_2}(\Gamma)) = \Psi_{\psi_1,\psi_2}(\mathcal{F}_t(\Gamma)) = \Gamma_{\psi_1,\psi_2},
\end{align}
where I have used the commutativity of the flow $\mathcal{F}_t$ and the $T^2$-action $\Psi_{\psi_1,\psi_2}$ implied by Eqs.\,\eqref{xi_invariance}-\eqref{H_invariance}. In other words $\Gamma_{\psi_1,\psi_2}$ is an invariant set for each $(\psi_1,\psi_2)\in T^2$. In the fast-slow coordinates $(x,y)$ on $\ell P\times S^1$, and for sufficiently small $(\psi_1,\psi_2)$, $\Gamma_{\psi_1,\psi_2}$ is therefore a formally invariant set given as the graph of some function $Y_{\psi_1,\psi_2}(x)$.  Moreover, the fact that $\Psi_{\psi_1,\psi_2}$ does not depend on $\epsilon$ implies that $Y_{\psi_1,\psi_2}$ must be a formal power series in $\epsilon$. By the uniqueness of the formal slow manifold, this means that $\Gamma_{\psi_1,\psi_2} = \Psi_{\psi_1,\psi_2}(\Gamma) = \Gamma$ is equal to the formal slow manifold, i.e. that $\Gamma$ is $T^2$-invariant to all orders in $\epsilon$. (Compare this argument with Kruskal's ``Theorem of phase independence" in Section C.1 of Ref.\,\onlinecite{Kruskal_1962}.)

Let $\gamma:\Gamma\rightarrow \ell P\times S^1$ be the inclusion map. Because $\Gamma$ is a formally invariant set, the presymplectic Hamilton's equation \eqref{presymp_ham} implies
\begin{align}
&\gamma^*(\iota_{\mathcal{X}}\mathbf{d}\Xi) =- \gamma^*\mathbf{d}\mathcal{H}\nonumber\\
\Rightarrow & \iota_{\mathcal{X}_\Gamma}\mathbf{d}\Xi_\Gamma =- \mathbf{d}\mathcal{H}_\Gamma,
\end{align}
where $\mathcal{X}_{\Gamma}$ is the infinitesimal generator $\mathcal{X}$ restricted to the formal slow manifold $\Gamma$, and $\Xi_\Gamma,\mathcal{H}_\Gamma$ are the pullbacks of the $1$-form $\Xi$ and functional $\mathcal{H}$ along $\gamma$. Because $\Gamma$ is $T^2$-invariant, the pullback relations \eqref{xi_invariance}-\eqref{H_invariance} imply analogous pullback relations on $\Gamma$:
\begin{align}
\Psi_{\psi_1,\psi_2}^*\Xi_{\Gamma } &= \Xi_\Gamma\\
\Psi_{\psi_1,\psi_2}^*\mathcal{H}_\Gamma & = \mathcal{H}_\Gamma.
\end{align}
Noether's theorem applied to $\Psi_{\psi_1,\psi_2}$ \emph{restricted to the formal slow manifold} therefore implies that 
\begin{align}
\mathcal{J}_1 &= \iota_{\bm{\partial}_1}\Xi_\Gamma\\
\mathcal{J}_2 & = \iota_{\bm{\partial}_2}\Xi_{\Gamma}.
\end{align}
Are each constant along trajectories contained in the formal slow manifold. Because $\iota_{\bm{\partial}_k}\Xi_\Gamma = \gamma^*(\iota_{\bm{\partial}_k}\Xi) = \gamma^*\mathcal{I}_k$, $\mathcal{J}_1$ and $\mathcal{J}_2$ are each equal to the normalized loop action restricted to the formal slow manifold. Because dynamics on the formal slow manifold is the same thing as guiding center dynamics, the obvious symmetry $\Psi_{\psi_1,\psi_2}$ on the phase space for loop dynamics is now shown to be responsible for a nontrivial conservation law for guiding center dynamics. My argument will therefore be complete if I can show that $\mathcal{J}_1 = \mathcal{J}_2 = \mathcal{J}$ is equal to the magnetic flux at leading-order in $\epsilon$.

To that end, I will demonstrate even more by explicitly recovering the Hamiltonian formulation of guiding center dynamics due to Littlejohn from the restricted $1$-form $\Xi_\Gamma$ and the restricted Hamiltonian $\mathcal{H}_\Gamma$. First note that the equality of the two Noether invariants $\mathcal{J}_1,\mathcal{J}_2$ has the remarkable consequence that the difference of the infinitesimal generators $\bm{\Delta} = \bm{\partial_1} - \bm{\partial_2}$ lies in the kernel of the closed $2$-form $\mathbf{d}\Xi_\Gamma$. Indeed,
\begin{align}
\iota_{\bm{\Delta}}\mathbf{d}\Xi_\Gamma &= \iota_{\bm{\partial}_1}\mathbf{d}\Xi_\Gamma - \iota_{\bm{\partial}_2}\mathbf{d}\Xi_\Gamma \nonumber\\
& = \mathcal{L}_{\bm{\partial}_1}\Xi_\Gamma - \mathcal{L}_{\bm{\partial}_2}\Xi_\Gamma - \mathbf{d}\mathcal{J}_1 +\mathbf{d}\mathcal{J}_2\nonumber\\
& = 0.
\end{align}
(Note that the identity $\mathcal{L}_{\bm{\partial}_k}\Xi_\Gamma = 0$ follows from differentiating Eq.\,\eqref{xi_invariance}.) Therefore $\mathbf{d}\Xi_\Gamma$ is not a symplectic form. This suggests that in order to recover Littlejohn's symplectic formulation of guiding center dynamics, it is necessary to first quotient $\ell P \times S^1$ by the foliation tangent to the kernel of $\mathbf{d}\Xi_\Gamma$. 

Because it is not immediately clear whether the dimension of $\mathbf{d}\Xi_\Gamma$'s characteristic foliation is greater than $1$, it is helpful to first quotient by the subfoliation tangent to $\bm{\Delta}$. By a slight abuse of notation, I will denote the latter foliation by $\bm{\Delta}$.
If the descendent of $\mathbf{d}\Xi_\Gamma$ on $\Gamma/\bm{\Delta}$ is non-degenerate, this would imply that $\bm{\Delta}$ frames the kernel of $\mathbf{d}\Xi_\Gamma$ and that the quotient by $\bm{\Delta}$ produces a symplectic space. Otherwise, a further quotient may be necessary. In the case at hand it is reasonable to suspect that only the quotient by $\bm{\Delta}$ is necessary; because the dimension of the formal slow manifold $\Gamma$ is $7$ and the dimension of the foliation tangent to $\bm{\Delta}$ is $1$, the quotient by $\bm{\Delta}$ will be $6$-dimensional, which is the
same dimension as Littlejohn's symplectic phase space.

In order to explicitly carry out the quotient by $\bm{\Delta}$, it is necessary to have an explicit expression for $\Psi_{\psi_1,\psi_2}$ restricted to the formal slow manifold, and especially useful to have this expression in the natural coordinates $x$ on $\Gamma$. To find this expression, observe that because $\Gamma$ is $T^2$-invariant, there must be a mapping $\varphi_{\psi_1,\psi_2}:X\rightarrow X$ such that 
\begin{align}
\Psi_{\psi_1,\psi_2}(x,y_\epsilon^*(x)) = (\varphi_{\psi_1,\psi_2}(x),y_\epsilon^*(\varphi_{\psi_1,\psi_2}(x)))
\end{align}
for all $x\in X$. The mapping $\varphi_{\psi_1,\psi_2}:X\rightarrow X$ is precisely the $T^2$-action restricted to $\Gamma$ expressed in the coordinates $x$. Also observe that after applying $\Psi_{\psi_1,\psi_2}$ to an arbitrary point $(x,y)$, the slow variable $x = (\overline{\bm{x}},\overline{u},w_1,w_2,\mathcal{S})$ transforms according to
\begin{align}
\overline{\bm{x}}&\mapsto \overline{\bm{x}}\\
\overline{u}&\mapsto \overline{u}\\
w_1&\mapsto w_1\cos\psi_1 + w_2\sin\psi_1\\
w_2&\mapsto w_2\cos\psi_1 - w_1\sin\psi_1\\
\mathcal{S}&\mapsto \mathcal{S} + \epsilon \psi_2.
\end{align}
(The transformation rules for $w_1,w_2$ may be summarized in vector notation as $\bm{w}_\perp\mapsto \bm{w}_\perp\cos\psi_1 + \bm{w}_\perp\times\bm{b}\,\sin\psi_1$.) In particular, the transformation of the slow variable $x$ does not depend on the fast variable $y$. It follows that the $T^2$-action on $x$-space is given by
\begin{align}
\varphi_{\psi_1,\psi_2}(\overline{\bm{x}},\overline{u},w_1,w_2,\mathcal{S}) = (\overline{\bm{x}},\overline{u},w_1\cos\psi_1 + w_2\sin\psi_1,w_2\cos\psi_1 - w_1\sin\psi_1,\mathcal{S} + \epsilon \psi_2),
\end{align}
whence the quotient by $\bm{\Delta}$, $\pi:\Gamma\rightarrow \Gamma/\bm{\Delta}$, may be identified as
\begin{align}
\pi(\overline{\bm{x}},\overline{u},w_1,w_2,\mathcal{S}) = (\overline{\bm{x}},\overline{u},u_1,u_2),
\end{align}
where
\begin{align}
\bm{u}_\perp = u_1\bm{e}_1(\overline{\bm{x}}) + u_2\bm{e}_2(\overline{\bm{x}}) = \cos(\mathcal{S}/\epsilon)\bm{w}_\perp + \sin(\mathcal{S}/\epsilon)\bm{w}_\perp\times\bm{b}.
\end{align}
In particular, a useful section of $\pi$ is given by $s: \Gamma/\bm{\Delta}\rightarrow \Gamma$, 
\begin{align}
s(\overline{\bm{x}},\overline{u},u_1,u_2) = (\overline{\bm{x}},\overline{u},u_1,u_2,0).
\end{align}

Because $\iota_{\bm{\Delta}}\mathbf{d}\Xi_\Gamma = 0$ and $\Psi_{\psi_1,\psi_2}^*\Xi_\Gamma = \Xi_\Gamma$, the $2$-form $\mathbf{d}\Xi_\Gamma$ on $\Gamma$ descends to the $2$-form $\mathbf{d}\Xi_{\Gamma/\bm{\Delta}}$ on $\Gamma/\bm{\Delta}$, where the $1$-form $\Xi_{\Gamma/\bm{\Delta}} = s^*\Xi_\Gamma = s^*\gamma^*\Xi = (\gamma\circ s)^*\Xi$. An explicit expression for $\Xi_{\Gamma/\bm{\Delta}}$ modulo an exact $1$-form is
\begin{align}
&\Xi_{\Gamma/\bm{\Delta}}(\overline{\bm{x}},\overline{u},u_1,u_2)[\dot{\overline{\bm{x}}},\dot{\overline{u}},\dot{u}_1,\dot{u}_2]\nonumber\\ = & \bigg(\frac{1}{\epsilon}\bm{A}(\overline{\bm{x}}) + \overline{u}\bm{b}(\overline{\bm{x}}) + \overline{\bm{v}}_{\perp\epsilon}^*+ \int_0^1 \fint \bm{B}(\overline{\bm{x}}+ \lambda\epsilon \widehat{\bm{\rho}}^*_\epsilon)\times \widehat{\bm{\rho}}^*_\epsilon\,d\theta\,d\lambda\bigg)\cdot \dot{\overline{\bm{x}}}\nonumber\\
& + \epsilon\fint \bigg( \widehat{\bm{v}}^*_\epsilon + \int_0^1 \bm{B}(\overline{\bm{x}}+\lambda\epsilon\widehat{\bm{\rho}}_\epsilon^*)\times\widehat{\bm{\rho}}_\epsilon^*\,\lambda d\lambda\bigg)\cdot D\widehat{\bm{\rho}}_\epsilon^*[\dot{\overline{\bm{x}}},\dot{\overline{u}},\dot{u}_1,\dot{u}_2]\,d\theta\\
& = \bigg(\frac{1}{\epsilon}\bm{A}(\overline{\bm{x}})+\overline{u}\bm{b}(\overline{\bm{x}}) +\epsilon \bm{W}\bigg)\cdot\dot{\overline{\bm{x}}} + \epsilon\,\mu_0 \frac{u_2 \,du_1 - u_1 \,du_2}{u_1^2 + u_2^2} + O(\epsilon^2),\label{restricted_2form}
\end{align}
where $\mu_0 = \left(\frac{|\bm{u}_\perp|^2}{2|\bm{B}(\overline{\bm{x}})|}\right)$ and
\begin{align}
\bm{W} =& -\frac{(\mu_0\nabla|\bm{B}| + \overline{u}^2\bm{\kappa})\times\bm{b}}{|\bm{B}|} + \frac{1}{2} \fint (\widehat{\bm{\rho}}_0^*\cdot\nabla\bm{B})\times \widehat{\bm{\rho}}_0^*\,d\theta + \frac{1}{2}\fint (\nabla\widehat{\bm{\rho}}_0^*)\cdot\widehat{\bm{v}}_0^*\,d\theta\nonumber\\
& = -\frac{3}{2}\frac{\mu_0\nabla |\bm{B}|\times\bm{b}}{|\bm{B}|} - \frac{\overline{u}^2\bm{\kappa}\times\bm{b}}{|\bm{B}|} - \frac{1}{2}\mu_0\tau\,\bm{b} - \mu_0 \bm{R}.
\end{align}
Upon taking an exterior derivative, Eq.\,\eqref{restricted_2form} reproduces the symplectic form for guiding center theory derived by Littlejohn modulo terms of $O(\epsilon)$ in $\bm{W}$. This is not a contradiction. Littlejohn's derivation made use of near-identity coordinate transformations that are not uniquely determined, i.e. the transformations depended on a number of arbitrary parameters. Different choices for those parameters would be necessary to recover the result \eqref{restricted_2form} from the near-identity coordinate transformation approach.

The residual part of the symmetry $\Psi$ that survives when passing to the quotient is the transformation
\begin{align}
u_1\mapsto & u_1\cos\psi + u_2\sin\psi\\
u_2\mapsto & u_2\cos\psi - u_1\sin\psi.
\end{align}
According to Eq.\eqref{restricted_2form}, the conserved quantity associated with this residual symmetry is given by
\begin{align}
\mathcal{J}_1 = \epsilon \mu_0 + O(\epsilon^2).
\end{align}
Because $\mu_0 = |\bm{B}||\widehat{\bm{\rho}}_0^*|^2/2$, $\mathcal{J}_1$ is proportional to the magnetic flux passing through the loop $\overline{\bm{x}}+\epsilon \widehat{\bm{\rho}}_0^*(\theta)$, as claimed. The conclusion is that the adiabatic invariant for charged particle dynamics in a strong magnetic field is the Noether conserved quantity associated with the symmetry of loop space dynamics under phase shift $\widetilde{z}\mapsto \widetilde{z}^\psi$.

\section{Discussion}

The loop space picture of guiding center dynamics developed in this article is closely related to the nonlinear WKB expansion of Kruskal\cite{kruskal58} and the two-timescale technique described by Hazeltine and Waelbroeck in Ref.\,\onlinecite{Hazeltine_Waelbroeck_2004}. Kruskal introduced the ansatz
\begin{align}
\bm{x}(t) = \sum_{k=-\infty}^\infty \epsilon^{|k|}\bm{X}_k(t)\,\exp(i k C(t)/\epsilon)\label{ansatz_kruskal}
\end{align}
for the spatial location of a charged particle, where $\bm{X}_k$ and $C$ were allowed to be formal power series in $\epsilon$. Hazeltine and Waelbroeck introduced the ansatz
\begin{align}
\bm{x}(t) =& \bm{X}(t) + \epsilon \bm{\rho}(\bm{X}(t),\bm{U}(t),t,\gamma(t))\label{ansatz_H1}\\
\bm{v}(t) = & \bm{U}(t) + \bm{u}(\bm{X}(t),\bm{U}(t),t,\gamma(t)),\label{ansatz_H2}
\end{align}
for the spatial location and velocity of a charged particle, where $\bm{U},\bm{\rho},\bm{u},\gamma$ were allowed to be formal power series in $\epsilon$, and the profile functions $\bm{\rho},\bm{u}$ were assumed periodic in $\gamma$ with zero average. Of course, being periodic, $\bm{\rho},\bm{u}$ may also be written
\begin{align}
\bm{\rho} = \sum_{k\neq 0} \bm{\rho}_k(\bm{X}(t),\bm{U}(t),t)\,\exp(i k\gamma)\\
\bm{u} = \sum_{k\neq 0} \bm{u}_k(\bm{X}(t),\bm{U}(t),t) \,\exp(ik\gamma),
\end{align}
which establishes a close link to Kruskal's ansatz \eqref{ansatz_kruskal}. Apparently each of these representations of the solution to Newton's equation $\epsilon \ddot{\bm{x}} = \dot{\bm{x}}\times \bm{B}(\bm{x})$ involve evaluating a time dependent, parameterized loop (characterized either by the coefficients $\bm{X}_k$ or $\bm{\rho}_k,\bm{u}_k$) at a rapidly rotating phase (either $C$ or $\gamma$). Therefore there can be no doubt that loop space is playing a role in each of these approaches. 

On the other hand, these approaches differ from the formal slow manifold approach described in this article for the following reasons. 
\\ \\
\noindent (1) Neither approach recognizes the link between its collection of formal asymptotic series and the non-perturbative notion of loop space dynamics. In particular, neither approach establishes that guiding center dynamics is equivalent to loop space dynamics restricted to a formal slow manifold. The conceptual framework supported by the formal slow manifold picture is therefore missing from Kruskal's and Hazeltine and Waelbroeck's work. For instance, as discussed in Section \ref{ham_struc}, the genuine simplicity of the symmetry underlying adiabatic invariance only manifests itself in the context of loop space dynamics. Moreover, the prospect of using the numerical method from Ref.\,\onlinecite{Gear_2006} to capture high-order guiding center effects without resorting to the laborious machinations of perturbation theory would not emerge in absence of the formal slow manifold picture.
\\ \\
\noindent (2) While in Kruskal's approach each of the $\bm{X}_k$'s is expanded in an asymptotic series, in the formal slow manifold approach only the fast variable restricted to the formal slow manifold $y = y_\epsilon^*$ is expanded in such a manner; the slow variable $x$ is never subject to asymptotic expansion. In fact Kruskal does not identify the fast slow split given in Theorem \ref{fast_slow_proof} at all. The main drawback of this ``expand everything" approach is that it obscures the phase space geometry underlying the problem with needless additional algebraic manipulations. Indeed, Kruskal comments that additional technical work due to Gardner and Berkowitz is required to prove that his calculation produces a valid asymptotic expansion of a solution to Newton's equation. In contrast, from the perspective of fast-slow systems the error estimates required to prove such a result may be formulated in a general context using little more than Gronwall's inequality.\cite{MacKay_2004} That said, Hairer and Lubich\cite{Hairer_Lubich_2018} have recently managed to apply Kruskal's ansatz to the problem of adiabatic invariance for a particular structure-preserving numerical integrator for charged particle dynamics; the ansatz is used in a proof that the integrator preserves a modified adiabatic invariant over extremely large time intervals.
\\ \\
\noindent (3) While Hazeltine and Waelbroeck aim to parameterize the fluctuating position $\bm{\rho}$ and fluctuating velocity $\bm{u}$ using the mean position $\bm{X}$ and mean velocity $\bm{U}$, the slow manifold approach parameterizes the fast variable $y$ using the slow variable $x\neq (\bm{X},\bm{U})$. In particular, the parameter $x$ involves pieces of the first harmonic of the fluctuating velocity, and does not involve the perpendicular components of the mean velocity. A first guess at a resolution of this apparent discrepancy is that the implicit function theorem might be able reparameterize the graph $y=y_\epsilon^*(x)$ by the variables $(\overline{\bm{x}},\overline{u},\overline{v}_1,\overline{v}_2)$, which are equivalent to Hazeltine and Waelbroeck's $(\bm{X},\bm{U})$. However, expressions \eqref{v1bar_star1}-\eqref{v2bar_star1} show that such an inversion is very poorly conditioned in general, and impossible in a uniform magnetic field. It therefore seems $(\bm{X},\bm{U})$ is a problematic choice for parameterizing the fluctuating position and velocity. That this is true can also be seen in the details of the guiding center calculation presented in Chapter 2 of Ref.\,\onlinecite{Hazeltine_Waelbroeck_2004}. While the goal of the calculation was to determine the dependence of $\bm{\rho}$ and $\bm{u}$ on $(\bm{X},\bm{U})$, an unexpected constraint on the perpendicular components of $\bm{U}$ appears in Eq.\, (2.30). In addition, as a consequence of mischaracterizing the general solution of Eq.\,(2.28), the derivation fails to recognize that there are two undetermined parameters (instead of one) in the leading-order contribution to $\bm{u}$. If the roles of the constrained components of $\bm{U}$ and the unconstrained components of $\bm{u}$ were merely exchanged, the method of Hazeltine and Waelbroeck would reproduce the steps in computing the formal slow manifold in loop space.  In other words, while the goal of the calculation in Ref.\,\onlinecite{Hazeltine_Waelbroeck_2004} seems to be flawed, the calculation itself seems to be suggesting that identifying the fast slow split as in Theorem \ref{fast_slow_proof} is the way to fix the problem!

It is also interesting to compare the approach used in Section \ref{ham_struc} to identify the noncanonical Hamiltonian structure of guiding center dynamics with Littlejohn's approach in Refs.\,\onlinecite{Littlejohn_1981,Littlejohn_1982}. Because Littlejohn worked in particle space rather than loop space, his strategy for identifying the Hamiltonian structure was to exploit the coordinate covariance of the symplectic Hamilton's equation, i.e. that the equation $\iota_\mathcal{X}\mathbf{d}\theta = -\mathbf{d}H$ has the same form in any coordinate system. In contrast, the strategy used in Section \ref{ham_struc} to find the Hamiltonian structure made use of the form-invariance of the (pre-)symplectic Hamilton's equation under restriction to invariant sets.

In a forthcoming publication, I will report on exploiting the formal slow manifold picture of guiding center dynamics for the sake of building a novel numerical scheme for efficiently simulating the slow drift of strongly magnetized charged particles. I have managed to show that applying the implicit-midpoint time integration scheme to loop space dynamics expressed in terms of the fast and slow variables $(x,y)$ leads to a nonlinearly implicit energy-conserving scheme that does not suffer from the preconditioning problem that usually plagues implicit integrators applied to stiff problems. Moreover, the scheme is provably accurate when taking timesteps much larger than the cyclotron period $\epsilon$ provided initial conditions are chosen to lie approximately on the formal slow manifold. This integrator is currently being optimized for the purpose of employing Gear \emph{et. al}'s technique for numerically selecting initial conditions on the formal slow manifold with any desired accuracy. The ultimate goal of this undertaking is to develop the first charged particle simulation tool that is able to resolve high-order guiding center effects while stepping over the cyclotron period. Such high-order effects appear to play an important role in the dynamics of so-called runaway electrons generated in magnetic traps.\cite{Liu_2018}

\section{Acknowledgements}
Research presented in this article was supported by the Los Alamos National Laboratory LDRD program under project number 20180756PRD4.


%


\bibliography{cumulative_bib_file.bib}

\providecommand{\noopsort}[1]{}\providecommand{\singleletter}[1]{#1}%
\begin{thebibliography}{27}%
\makeatletter
\providecommand \@ifxundefined [1]{%
 \@ifx{#1\undefined}
}%
\providecommand \@ifnum [1]{%
 \ifnum #1\expandafter \@firstoftwo
 \else \expandafter \@secondoftwo
 \fi
}%
\providecommand \@ifx [1]{%
 \ifx #1\expandafter \@firstoftwo
 \else \expandafter \@secondoftwo
 \fi
}%
\providecommand \natexlab [1]{#1}%
\providecommand \enquote  [1]{``#1''}%
\providecommand \bibnamefont  [1]{#1}%
\providecommand \bibfnamefont [1]{#1}%
\providecommand \citenamefont [1]{#1}%
\providecommand \href@noop [0]{\@secondoftwo}%
\providecommand \href [0]{\begingroup \@sanitize@url \@href}%
\providecommand \@href[1]{\@@startlink{#1}\@@href}%
\providecommand \@@href[1]{\endgroup#1\@@endlink}%
\providecommand \@sanitize@url [0]{\catcode `\\12\catcode `\$12\catcode
  `\&12\catcode `\#12\catcode `\^12\catcode `\_12\catcode `\%12\relax}%
\providecommand \@@startlink[1]{}%
\providecommand \@@endlink[0]{}%
\providecommand \url  [0]{\begingroup\@sanitize@url \@url }%
\providecommand \@url [1]{\endgroup\@href {#1}{\urlprefix }}%
\providecommand \urlprefix  [0]{URL }%
\providecommand \Eprint [0]{\href }%
\providecommand \doibase [0]{http://dx.doi.org/}%
\providecommand \selectlanguage [0]{\@gobble}%
\providecommand \bibinfo  [0]{\@secondoftwo}%
\providecommand \bibfield  [0]{\@secondoftwo}%
\providecommand \translation [1]{[#1]}%
\providecommand \BibitemOpen [0]{}%
\providecommand \bibitemStop [0]{}%
\providecommand \bibitemNoStop [0]{.\EOS\space}%
\providecommand \EOS [0]{\spacefactor3000\relax}%
\providecommand \BibitemShut  [1]{\csname bibitem#1\endcsname}%
\let\auto@bib@innerbib\@empty
\bibitem [{\citenamefont {Kruskal}(1958)}]{kruskal58}%
  \BibitemOpen
  \bibfield  {author} {\bibinfo {author} {\bibfnamefont {M.}~\bibnamefont
  {Kruskal}},\ }\href@noop {} {\enquote {\bibinfo {title} {The gyration of a
  charged particle},}\ }\bibinfo {type} {Project Matterhorn Report}\ \bibinfo
  {number} {PM-S-33 (NYO-7903)}\ (\bibinfo  {institution} {Princeton
  University},\ \bibinfo {year} {1958})\BibitemShut {NoStop}%
\bibitem [{\citenamefont {Gardner}(1959)}]{Gardner_59}%
  \BibitemOpen
  \bibfield  {author} {\bibinfo {author} {\bibfnamefont {C.~S.}\ \bibnamefont
  {Gardner}},\ }\bibfield  {title} {\enquote {\bibinfo {title} {Adiabatic
  invariants of classical periodic systems},}\ }\href@noop {} {\bibfield
  {journal} {\bibinfo  {journal} {Phys. Rev.}\ }\textbf {\bibinfo {volume}
  {115}},\ \bibinfo {pages} {791} (\bibinfo {year} {1959})}\BibitemShut
  {NoStop}%
\bibitem [{\citenamefont {Bogoliubov}(1961)}]{Bogoliubov_1961}%
  \BibitemOpen
  \bibfield  {author} {\bibinfo {author} {\bibfnamefont {N.}~\bibnamefont
  {Bogoliubov}},\ }\href@noop {} {\emph {\bibinfo {title} {Asymptotic Methods
  in the Theory of Non-Linear Oscillations}}},\ International monographs on
  advanced mathematics and physics\ (\bibinfo  {publisher} {Hindustan},\
  \bibinfo {year} {1961})\BibitemShut {NoStop}%
\bibitem [{\citenamefont {Northrop}(1963)}]{Northrop_1963}%
  \BibitemOpen
  \bibfield  {author} {\bibinfo {author} {\bibfnamefont {T.~G.}\ \bibnamefont
  {Northrop}},\ }\href@noop {} {\emph {\bibinfo {title} {The Adiabatic Motion
  of Charged Particles}}},\ Interscience tracts on physics and astronomy\
  (\bibinfo  {publisher} {Interscience Publishers},\ \bibinfo {year}
  {1963})\BibitemShut {NoStop}%
\bibitem [{\citenamefont {Littlejohn}(1981)}]{Littlejohn_1981}%
  \BibitemOpen
  \bibfield  {author} {\bibinfo {author} {\bibfnamefont {R.~G.}\ \bibnamefont
  {Littlejohn}},\ }\bibfield  {title} {\enquote {\bibinfo {title} {Hamiltonian
  formulation of guiding center motion},}\ }\href@noop {} {\bibfield  {journal}
  {\bibinfo  {journal} {Phys.\ Fluids}\ }\textbf {\bibinfo {volume} {24}},\
  \bibinfo {pages} {1730} (\bibinfo {year} {1981})}\BibitemShut {NoStop}%
\bibitem [{\citenamefont {Littlejohn}(1983)}]{Littlejohn_1983}%
  \BibitemOpen
  \bibfield  {author} {\bibinfo {author} {\bibfnamefont {R.~G.}\ \bibnamefont
  {Littlejohn}},\ }\bibfield  {title} {\enquote {\bibinfo {title} {Variational
  principles of guiding centre motion},}\ }\href@noop {} {\bibfield  {journal}
  {\bibinfo  {journal} {J. Plasma Phys.}\ }\textbf {\bibinfo {volume} {29}},\
  \bibinfo {pages} {111} (\bibinfo {year} {1983})}\BibitemShut {NoStop}%
\bibitem [{\citenamefont {Hazeltine}\ and\ \citenamefont
  {Waelbroeck}(2004)}]{Hazeltine_Waelbroeck_2004}%
  \BibitemOpen
  \bibfield  {author} {\bibinfo {author} {\bibfnamefont {R.~D.}\ \bibnamefont
  {Hazeltine}}\ and\ \bibinfo {author} {\bibfnamefont {F.~L.}\ \bibnamefont
  {Waelbroeck}},\ }\href@noop {} {\emph {\bibinfo {title} {{T}he {F}ramework of
  {P}lasma {P}hysics}}},\ \bibinfo {edition} {1st}\ ed.,\ The Art of Computer
  Programming\ (\bibinfo  {publisher} {{CRC} {P}ress},\ \bibinfo {address}
  {{B}oca {R}aton},\ \bibinfo {year} {2004})\BibitemShut {NoStop}%
\bibitem [{\citenamefont {Kruskal}(1962)}]{Kruskal_1962}%
  \BibitemOpen
  \bibfield  {author} {\bibinfo {author} {\bibfnamefont {M.}~\bibnamefont
  {Kruskal}},\ }\bibfield  {title} {\enquote {\bibinfo {title} {Asymptotic
  theory of hamiltonian and other systems with all solutions nearly
  periodic},}\ }\href@noop {} {\bibfield  {journal} {\bibinfo  {journal} {J.
  Math. Phys.}\ }\textbf {\bibinfo {volume} {3}},\ \bibinfo {pages} {806}
  (\bibinfo {year} {1962})}\BibitemShut {NoStop}%
\bibitem [{\citenamefont {Burby}, \citenamefont {Squire},\ and\ \citenamefont
  {Qin}(2013)}]{Burby_gc_2013}%
  \BibitemOpen
  \bibfield  {author} {\bibinfo {author} {\bibfnamefont {J.~W.}\ \bibnamefont
  {Burby}}, \bibinfo {author} {\bibfnamefont {J.}~\bibnamefont {Squire}}, \
  and\ \bibinfo {author} {\bibfnamefont {H.}~\bibnamefont {Qin}},\ }\bibfield
  {title} {\enquote {\bibinfo {title} {Automation of the guiding center
  expansion},}\ }\href@noop {} {\bibfield  {journal} {\bibinfo  {journal}
  {Phys. Plasmas}\ }\textbf {\bibinfo {volume} {20}},\ \bibinfo {pages}
  {072105} (\bibinfo {year} {2013})}\BibitemShut {NoStop}%
\bibitem [{\citenamefont {Fenichel}(1979)}]{Fenichel_1979}%
  \BibitemOpen
  \bibfield  {author} {\bibinfo {author} {\bibfnamefont {N.}~\bibnamefont
  {Fenichel}},\ }\bibfield  {title} {\enquote {\bibinfo {title} {Geometric
  singular perturbation theory for ordinary differential equations},}\ }\href
  {\doibase https://doi.org/10.1016/0022-0396(79)90152-9} {\bibfield  {journal}
  {\bibinfo  {journal} {J. Differ. Eq.}\ }\textbf {\bibinfo {volume} {31}},\
  \bibinfo {pages} {53--98} (\bibinfo {year} {1979})}\BibitemShut {NoStop}%
\bibitem [{\citenamefont {Lorenz}(1986)}]{Lorenz_1986}%
  \BibitemOpen
  \bibfield  {author} {\bibinfo {author} {\bibfnamefont {E.~N.}\ \bibnamefont
  {Lorenz}},\ }\bibfield  {title} {\enquote {\bibinfo {title} {On the existence
  of a slow manifold},}\ }\href@noop {} {\bibfield  {journal} {\bibinfo
  {journal} {J. Atmos. Sci.}\ }\textbf {\bibinfo {volume} {43}},\ \bibinfo
  {pages} {1547--1557} (\bibinfo {year} {1986})}\BibitemShut {NoStop}%
\bibitem [{\citenamefont {Lorenz}\ and\ \citenamefont
  {Krishnamurthy}(1987)}]{Lorenz_1987}%
  \BibitemOpen
  \bibfield  {author} {\bibinfo {author} {\bibfnamefont {E.~N.}\ \bibnamefont
  {Lorenz}}\ and\ \bibinfo {author} {\bibfnamefont {V.}~\bibnamefont
  {Krishnamurthy}},\ }\bibfield  {title} {\enquote {\bibinfo {title} {On the
  nonexistence of a slow manifold},}\ }\href@noop {} {\bibfield  {journal}
  {\bibinfo  {journal} {J. Atmos. Sci.}\ }\textbf {\bibinfo {volume} {44}},\
  \bibinfo {pages} {2940--2950} (\bibinfo {year} {1987})}\BibitemShut {NoStop}%
\bibitem [{\citenamefont {Lorenz}(1992)}]{Lorenz_1992}%
  \BibitemOpen
  \bibfield  {author} {\bibinfo {author} {\bibfnamefont {E.~N.}\ \bibnamefont
  {Lorenz}},\ }\bibfield  {title} {\enquote {\bibinfo {title} {The slow
  manifold --- what is it?}}\ }\href@noop {} {\bibfield  {journal} {\bibinfo
  {journal} {J. Atmos. Sci.}\ }\textbf {\bibinfo {volume} {49}},\ \bibinfo
  {pages} {2449--2451} (\bibinfo {year} {1992})}\BibitemShut {NoStop}%
\bibitem [{\citenamefont {MacKay}(2004)}]{MacKay_2004}%
  \BibitemOpen
  \bibfield  {author} {\bibinfo {author} {\bibfnamefont {R.~S.}\ \bibnamefont
  {MacKay}},\ }\bibfield  {title} {\enquote {\bibinfo {title} {Slow
  manifolds},}\ }in\ \href@noop {} {\emph {\bibinfo {booktitle} {Energy
  localization and transfer}}},\ \bibinfo {editor} {edited by\ \bibinfo
  {editor} {\bibfnamefont {T.}~\bibnamefont {Dauxois}}, \bibinfo {editor}
  {\bibfnamefont {A.}~\bibnamefont {Litvak-Hinenzon}}, \bibinfo {editor}
  {\bibfnamefont {R.~S.}\ \bibnamefont {MacKay}}, \ and\ \bibinfo {editor}
  {\bibfnamefont {A.}~\bibnamefont {Spanoudaki}}}\ (\bibinfo  {publisher}
  {World Scientific},\ \bibinfo {year} {2004})\ pp.\ \bibinfo {pages}
  {149--192}\BibitemShut {NoStop}%
\bibitem [{\citenamefont {Gear}\ \emph {et~al.}(2006)\citenamefont {Gear},
  \citenamefont {Kaper}, \citenamefont {Kevrekidis},\ and\ \citenamefont
  {Zagaris}}]{Gear_2006}%
  \BibitemOpen
  \bibfield  {author} {\bibinfo {author} {\bibfnamefont {C.~W.}\ \bibnamefont
  {Gear}}, \bibinfo {author} {\bibfnamefont {T.~J.}\ \bibnamefont {Kaper}},
  \bibinfo {author} {\bibfnamefont {I.~G.}\ \bibnamefont {Kevrekidis}}, \ and\
  \bibinfo {author} {\bibfnamefont {A.}~\bibnamefont {Zagaris}},\ }\bibfield
  {title} {\enquote {\bibinfo {title} {Projecting to a slow manifold:
  Singularly perturbed systems and legacy codes},}\ }\href@noop {} {\bibfield
  {journal} {\bibinfo  {journal} {SIAM J. Appl. Dyn. Syst.}\ }\textbf {\bibinfo
  {volume} {4}},\ \bibinfo {pages} {711--732} (\bibinfo {year}
  {2006})}\BibitemShut {NoStop}%
\bibitem [{\citenamefont {Zagaris}\ \emph {et~al.}(2009)\citenamefont
  {Zagaris}, \citenamefont {Gear}, \citenamefont {Kaper},\ and\ \citenamefont
  {Kevrekidis}}]{Zagaris_2009}%
  \BibitemOpen
  \bibfield  {author} {\bibinfo {author} {\bibfnamefont {A.}~\bibnamefont
  {Zagaris}}, \bibinfo {author} {\bibfnamefont {C.~W.}\ \bibnamefont {Gear}},
  \bibinfo {author} {\bibfnamefont {T.~J.}\ \bibnamefont {Kaper}}, \ and\
  \bibinfo {author} {\bibfnamefont {I.~G.}\ \bibnamefont {Kevrekidis}},\
  }\bibfield  {title} {\enquote {\bibinfo {title} {Analysis of the accuracy and
  convergence of equation-free projection to a slow manifold},}\ }\href@noop {}
  {\bibfield  {journal} {\bibinfo  {journal} {ESAIM: Mathematical Modelling and
  Numerical Analysis}\ }\textbf {\bibinfo {volume} {43}},\ \bibinfo {pages}
  {757--784} (\bibinfo {year} {2009})}\BibitemShut {NoStop}%
\bibitem [{\citenamefont {Miura}\ and\ \citenamefont
  {Kruskal}(1974)}]{Miura_1974}%
  \BibitemOpen
  \bibfield  {author} {\bibinfo {author} {\bibfnamefont {R.~M.}\ \bibnamefont
  {Miura}}\ and\ \bibinfo {author} {\bibfnamefont {M.~D.}\ \bibnamefont
  {Kruskal}},\ }\bibfield  {title} {\enquote {\bibinfo {title} {Application of
  a non linear {WKB} method to the {K}orteweg-{D}e{V}ries equation},}\
  }\href@noop {} {\bibfield  {journal} {\bibinfo  {journal} {SIAM J. Appl.
  Math.}\ }\textbf {\bibinfo {volume} {26}},\ \bibinfo {pages} {376} (\bibinfo
  {year} {1974})}\BibitemShut {NoStop}%
\bibitem [{\citenamefont {Burby}\ and\ \citenamefont
  {Ruiz}(2019)}]{Burby_Ruiz_2019}%
  \BibitemOpen
  \bibfield  {author} {\bibinfo {author} {\bibfnamefont {J.~W.}\ \bibnamefont
  {Burby}}\ and\ \bibinfo {author} {\bibfnamefont {D.~E.}\ \bibnamefont
  {Ruiz}},\ }\href@noop {} {\enquote {\bibinfo {title} {Variational nonlinear
  wkb in the eulerian frame},}\ }\bibinfo {howpublished} {e-print
  arXiv:1902.04221} (\bibinfo {year} {2019})\BibitemShut {NoStop}%
\bibitem [{\citenamefont {Vanneste}(2008)}]{Vanneste_2008}%
  \BibitemOpen
  \bibfield  {author} {\bibinfo {author} {\bibfnamefont {J.}~\bibnamefont
  {Vanneste}},\ }\bibfield  {title} {\enquote {\bibinfo {title} {Asymptotics of
  a slow manifold},}\ }\href {\doibase https://doi.org/10.1137/070710081}
  {\bibfield  {journal} {\bibinfo  {journal} {SIAM J. Appl. Dyn. Syst.}\
  }\textbf {\bibinfo {volume} {7}},\ \bibinfo {pages} {1163--1190} (\bibinfo
  {year} {2008})}\BibitemShut {NoStop}%
\bibitem [{\citenamefont
  {Littlejohn}(1982{\natexlab{a}})}]{Littlejohn_bounce_1982}%
  \BibitemOpen
  \bibfield  {author} {\bibinfo {author} {\bibfnamefont {R.~G.}\ \bibnamefont
  {Littlejohn}},\ }\bibfield  {title} {\enquote {\bibinfo {title} {Hamiltonian
  theory of guiding center bounce motion},}\ }\href@noop {} {\bibfield
  {journal} {\bibinfo  {journal} {Phys. Scr}\ }\textbf {\bibinfo {volume}
  {T2/1}},\ \bibinfo {pages} {119} (\bibinfo {year}
  {1982}{\natexlab{a}})}\BibitemShut {NoStop}%
\bibitem [{\citenamefont {Littlejohn}(1984)}]{Littlejohn_1984}%
  \BibitemOpen
  \bibfield  {author} {\bibinfo {author} {\bibfnamefont {R.~G.}\ \bibnamefont
  {Littlejohn}},\ }\bibfield  {title} {\enquote {\bibinfo {title} {Geometry and
  guiding center motion},}\ }in\ \href@noop {} {\emph {\bibinfo {booktitle}
  {Fluids and Plasmas: Geometry and Dynamics}}},\ \bibinfo {series}
  {Contemporary mathematics}, Vol.~\bibinfo {volume} {28},\ \bibinfo {editor}
  {edited by\ \bibinfo {editor} {\bibfnamefont {J.~E.}\ \bibnamefont
  {Marsden}}}\ (\bibinfo  {publisher} {American Mathematical Society},\
  \bibinfo {year} {1984})\ pp.\ \bibinfo {pages} {151--167}\BibitemShut
  {NoStop}%
\bibitem [{\citenamefont {Littlejohn}(1988)}]{Littlejohn_1988}%
  \BibitemOpen
  \bibfield  {author} {\bibinfo {author} {\bibfnamefont {R.~G.}\ \bibnamefont
  {Littlejohn}},\ }\bibfield  {title} {\enquote {\bibinfo {title} {Phase
  anholonomy in the classical adiabatic motion of charged particles},}\
  }\href@noop {} {\bibfield  {journal} {\bibinfo  {journal} {Phys. Rev. A}\
  }\textbf {\bibinfo {volume} {38}},\ \bibinfo {pages} {6034} (\bibinfo {year}
  {1988})}\BibitemShut {NoStop}%
\bibitem [{\citenamefont {Burby}\ and\ \citenamefont {Qin}(2012)}]{Burby_2012}%
  \BibitemOpen
  \bibfield  {author} {\bibinfo {author} {\bibfnamefont {J.~W.}\ \bibnamefont
  {Burby}}\ and\ \bibinfo {author} {\bibfnamefont {H.}~\bibnamefont {Qin}},\
  }\bibfield  {title} {\enquote {\bibinfo {title} {Gyrosymmetry: Global
  considerations},}\ }\href@noop {} {\bibfield  {journal} {\bibinfo  {journal}
  {Phys. Plasmas}\ }\textbf {\bibinfo {volume} {19}},\ \bibinfo {pages}
  {052106} (\bibinfo {year} {2012})}\BibitemShut {NoStop}%
\bibitem [{\citenamefont {Qin}\ \emph {et~al.}(2007)\citenamefont {Qin},
  \citenamefont {Cohen}, \citenamefont {Nevins},\ and\ \citenamefont
  {Xu}}]{Qin_2007}%
  \BibitemOpen
  \bibfield  {author} {\bibinfo {author} {\bibfnamefont {H.}~\bibnamefont
  {Qin}}, \bibinfo {author} {\bibfnamefont {R.~H.}\ \bibnamefont {Cohen}},
  \bibinfo {author} {\bibfnamefont {W.~M.}\ \bibnamefont {Nevins}}, \ and\
  \bibinfo {author} {\bibfnamefont {X.~Q.}\ \bibnamefont {Xu}},\ }\bibfield
  {title} {\enquote {\bibinfo {title} {Geometric gyrokinetic theory for edge
  plasmas},}\ }\href@noop {} {\bibfield  {journal} {\bibinfo  {journal} {Phys.
  Plasmas}\ }\textbf {\bibinfo {volume} {14}},\ \bibinfo {pages} {056110}
  (\bibinfo {year} {2007})}\BibitemShut {NoStop}%
\bibitem [{\citenamefont {Hairer}\ and\ \citenamefont
  {Lubich}(2018)}]{Hairer_Lubich_2018}%
  \BibitemOpen
  \bibfield  {author} {\bibinfo {author} {\bibfnamefont {E.}~\bibnamefont
  {Hairer}}\ and\ \bibinfo {author} {\bibfnamefont {C.}~\bibnamefont
  {Lubich}},\ }\bibfield  {title} {\enquote {\bibinfo {title} {Long-term
  analysis of a variational integra-tor for charged-particle dynamics in a
  strong magnetic field},}\ }\href@noop {} {\bibfield  {journal} {\bibinfo
  {journal} {CRC Preprint 1173}\ } (\bibinfo {year} {2018})}\BibitemShut
  {NoStop}%
\bibitem [{\citenamefont {Littlejohn}(1982{\natexlab{b}})}]{Littlejohn_1982}%
  \BibitemOpen
  \bibfield  {author} {\bibinfo {author} {\bibfnamefont {R.~G.}\ \bibnamefont
  {Littlejohn}},\ }\bibfield  {title} {\enquote {\bibinfo {title} {Hamiltonian
  perturbation theory in noncanonical coordinates},}\ }\href@noop {} {\bibfield
   {journal} {\bibinfo  {journal} {J. Math. Phys.}\ }\textbf {\bibinfo {volume}
  {23}},\ \bibinfo {pages} {742} (\bibinfo {year}
  {1982}{\natexlab{b}})}\BibitemShut {NoStop}%
\bibitem [{\citenamefont {Liu}\ \emph {et~al.}(2018)\citenamefont {Liu},
  \citenamefont {Qin}, \citenamefont {Hirvijoki}, \citenamefont {Wang},\ and\
  \citenamefont {Liu}}]{Liu_2018}%
  \BibitemOpen
  \bibfield  {author} {\bibinfo {author} {\bibfnamefont {C.}~\bibnamefont
  {Liu}}, \bibinfo {author} {\bibfnamefont {H.}~\bibnamefont {Qin}}, \bibinfo
  {author} {\bibfnamefont {E.}~\bibnamefont {Hirvijoki}}, \bibinfo {author}
  {\bibfnamefont {Y.}~\bibnamefont {Wang}}, \ and\ \bibinfo {author}
  {\bibfnamefont {J.}~\bibnamefont {Liu}},\ }\bibfield  {title} {\enquote
  {\bibinfo {title} {Conservative magnetic moment of runaway electrons and
  collisionless pitch-angle scattering},}\ }\href@noop {} {\bibfield  {journal}
  {\bibinfo  {journal} {Nucl. Fusion}\ }\textbf {\bibinfo {volume} {58}},\
  \bibinfo {pages} {106018} (\bibinfo {year} {2018})}\BibitemShut {NoStop}%
\end{thebibliography}%



\end{document}